\newcommand{\Identity}{{\rm I\kern-.2em l}}
\newcommand{\Expect}{\mathbb{E}}
\declaretheoremstyle[%
  spaceabove=3pt,%
  spacebelow=3pt,%
  headfont=\normalfont\bfseries,%
  bodyfont=\normalfont\itshape,%
  postheadspace=0.5em,%
]{theoremstyle} 
\declaretheorem[name={Definition},style=theoremstyle]{definition}
\declaretheorem[name={Assumption},style=theoremstyle]{assumption}
\declaretheorem[name={Theorem},style=theoremstyle]{theorem}
\declaretheorem[name={Lemma},style=theoremstyle]{lemma}
\declaretheoremstyle[%
  spaceabove=0pt,%
  spacebelow=3pt,%
  headfont=\normalfont\itshape,%
  postheadspace=1em,%
  qed=\qedsymbol%
]{proofstyle}
\begin{document}
\title{{
Incentive Mechanism Design for Unbiased \\Federated Learning with Randomized \\
	Client Participation}
	
	\vspace{-6mm}
}%

\author{\IEEEauthorblockN{Bing Luo\IEEEauthorrefmark{1}, Yutong Feng\IEEEauthorrefmark{2}, Shiqiang Wang\IEEEauthorrefmark{4}, Jianwei Huang\IEEEauthorrefmark{2}\IEEEauthorrefmark{3}, Leandros Tassiulas\IEEEauthorrefmark{5}}
\IEEEauthorblockA{\IEEEauthorrefmark{1}Electrical and Computer Engineering, Division of Natural and Applied Sciences, Duke Kunshan University, Kunshan, China\\
\IEEEauthorrefmark{2}School of Science and Engineering, The Chinese University of Hong Kong, Shenzhen, China\\ \IEEEauthorrefmark{3}Shenzhen Institute of Artificial Intelligence and Robotics for Society, Shenzhen, China\\
\IEEEauthorrefmark{4}IBM T. J. Watson Research Center, Yorktown Heights, NY, USA\\
\IEEEauthorrefmark{5}Department of Electrical Engineering and Institute for Network Science, Yale University, USA\\
Email: bing.luo@dukekunshan.edu.cn, yutongfeng@link.cuhk.edu.cn, shiqiang.wang@ieee.org, \\jianweihuang@cuhk.edu.cn, leandros.tassiulas@yale.edu}
\thanks{{The work of Bing Luo was supported by the AIRS-CUHKSZ-Yale Joint Postdoctoral Fellowship. 
The work of Jianwei Huang was supported by  National Natural Science Foundation of China (Project 62271434), Shenzhen Science and Technology Program (Project JCYJ20210324120011032), Guangdong Basic and Applied Basic Research Foundation (Project 2021B1515120008), Shenzhen Key Lab of Crowd Intelligence EmpoweredLow-Carbon Energy Network (No. ZDSYS20220606100601002), and the Shenzhen Institute of Artificial Intelligence and Robotics for Society. The work of Leandros Tassiulas was  supported by ARO W911NF-23-1-0088,  NSF-AoF FAIN 2132573  and  NSF  CNS-2112562. (Corresponding author: Jianwei Huang.)}}
\vspace{-0.3in}
}

\maketitle

\begin{abstract}
Incentive mechanism is crucial for federated learning (FL) when rational clients do not have the same interests in the global model as the server. However, due to system heterogeneity and limited budget, it is generally impractical for the server to incentivize all clients to participate in all training rounds (known as full participation). 
{The existing 
FL incentive mechanisms are typically designed by 
 stimulating  \emph{a fixed subset of clients} based on their data quantity or system resources. Hence, FL is performed only using this subset of clients throughout the entire training process, leading to a \emph{biased model because of data heterogeneity}.}  
This paper proposes a game-theoretic incentive mechanism for FL with \emph{randomized client participation}, where the server adopts a customized pricing strategy that motivates different clients to join with different \emph{participation levels (probabilities)} for obtaining an \emph{unbiased} and high-performance model. Each client responds to the server's monetary incentive by choosing its best participation level, to maximize its profit based on not only the incurred local cost but also its \emph{intrinsic value} for the global model. To  effectively evaluate clients' contribution to the model performance, we derive a new  {convergence bound} which analytically predicts how clients' arbitrary participation levels and their heterogeneous data affect the model performance. 
By solving a non-convex optimization problem, our analysis reveals that the intrinsic value leads to the interesting possibility of \emph{bi-directional payment} between the server and clients. 
Experimental results using real datasets on a hardware prototype demonstrate the superiority of our  mechanism in achieving higher model performance for the server as well as higher profits for the clients. 
	\end{abstract}

	\maketitle
	\section{Introduction} \label{sec:intro}

\label{sec:mot}
{Federated learning (FL)} has recently emerged as an attractive distributed machine learning paradigm, which 
enables many clients 
 to collaboratively train a machine learning model under the coordination of a central server, while keeping the training  data private \cite{kairouz2019advances}. 
In FL, each client exploits its local dataset  to compute a local model update, and the server periodically aggregates these local model updates to obtain a global model \cite{mcmahan2017communication}. 
Because clients and the server in FL usually belong to different entities, clients may not have as much interest in obtaining a high-performance  model (i.e., higher  accuracy and lower loss) as the server. For instance, a company could adopt the FL framework to train a commercial model by letting its customers   train the model with their local data and on their own devices,  while the customers (clients) may not be interested in the model. 
Hence, without {sufficient compensation}, clients may not be willing to participate due to the associated local cost (e.g., resource consumption for computation and communication), 
which makes the incentive mechanism design crucial in FL systems \cite{zhan2021survey,zeng2021comprehensive,zhang2021jsac}. 

However, 
designing an efficient and effective incentive mechanism for FL is challenging due to {two unique FL features \cite{li2018federated}:}
  1)  clients in FL systems are usually massively distributed with 
different local resources and independent availability (known as \emph{system heterogeneity});
    2) 
    the training data
are unbalanced and non-i.i.d. 
across the clients (known as \emph{statistical/data heterogeneity}). 

Due to system heterogeneity,  
it is generally impractical to design an incentive mechanism for FL that requires all clients to participate in \emph{all  
training rounds} (known as full client participation). This is because FL usually involves a large number of clients and multiple training rounds, and 
incentivizing 
full client 
participation requires 
 a considerable monetary compensation, 
 which may be beyond the server's budget. 
Moreover,  clients may be only intermittently available due to their usage patterns, which 
prevents them from participating 
in every training round. 

In the existing studies (e.g., \cite{lim2020hierarchical,kang2019incentive,ding2020optimal,ng2020joint,le2021incentive,jiao2020toward,zeng2020fmore,deng2021fair}), incentive mechanism  
with partial client participation  
mainly stimulate a \emph{deterministic} (fixed) subset of  ``valuable'' clients based on their data quantity,  computation or communication resources, where FL is performed only using this subset of clients throughout the entire training process. 
 {However, these mechanisms may 
result in a 
severely \emph{biased model} because of statistical heterogeneity (e.g., the data at incentivized clients may not be representative of all clients' data), which \emph{fails to converge to the optimal model that would be obtained if all the clients participate in training}}. 
Therefore, we are motivated to study the first key challenging question:

{\setlength{\parskip}{0.2em}\noindent\textbf{Question 1:}} \emph{How to design a practical incentive mechanism for FL with partial client participation, to ensure the convergence to a globally optimal unbiased model?}

{\setlength{\parskip}{0.2em}
In tackling the bias issue due to partial client participation, recent works have proposed 
various client sampling and model aggregation algorithms (e.g., \cite{yang2021achieving,li2019convergence, qu2020federated,chen2020optimal,rizk2020federated,9579038,cho2020client,luo2022tackling}),  
where the  
clients are presumed to be always active (available) upon the server’s request.  
This assumption may not always hold in practice, as  clients are independent decision-makers  with different local interests.  
This motivates us to 
explore {the role of 
\emph{clients' independent
participation levels (probabilities)} in incentive mechanism design, where the server offers monetary rewards to stimulate all clients to join with desired participation levels.}

From the server's perspective, intuitively, it can achieve a higher model performance  if all clients participate at higher levels, whereas the resulting payment may be beyond the limited budget. 
In addition, due to statistical heterogeneity, 
clients with higher participation levels may have low data quality (e.g., small datasize or skewed data distribution) and thus have little contribution to the model performance. 
Therefore, it is critical to design an efficient payment strategy
where clients who contribute more towards the global training objective receive more rewards. 
However, without actually training the model, it is generally impossible to determine how clients' participation level and their non-i.i.d. data affect the model performance. 
This leads to our second key question: }


{\setlength{\parskip}{0.2em}\noindent\textbf{Question 2:}} \emph{How to incentivize the clients' active participation in FL by measuring   the  contribution of each client's participation level and local (non i.i.d.) data on the model performance, so as to design an efficient payment strategy?}

{\setlength{\parskip}{0.1em}{From the clients' perspective, 
instead of purely providing training data and services to gain profit,  
clients may have  \emph{intrinsic value  (motivation)} to  participate in 
FL, e.g., to obtain 
a more powerful global model for their own use when  
they have few or skewed local data samples. 
Depending on a client's intrinsic value, the server may adopt a flexible pricing mechanism and achieve the proper trade-off between encouraging clients' participation and reducing the overall cost. 
\emph{To our best knowledge, the role of intrinsic value has not been studied in the FL incentive mechanism}.}} 

\begin{figure}[t]
	\centering
	\includegraphics[width=8.4cm]{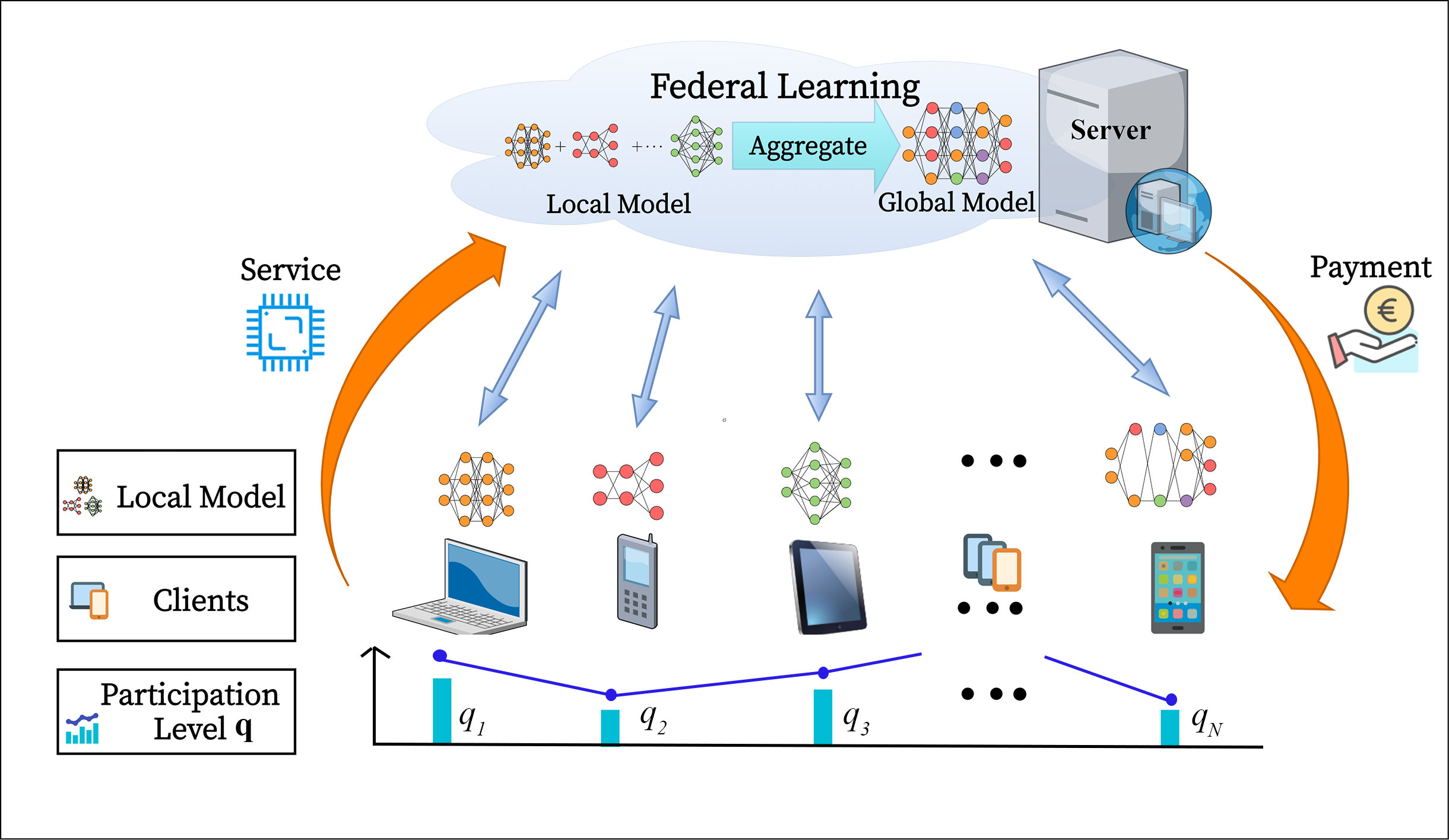}
	\vspace{-1mm}
	\caption{{Incentive mechanism for federated learning with randomized client participation, where the server adopts a customized payment strategy to motivate clients to join with different
participation levels $\boldsymbol{q}$.}}
	\label{model_fig}
	\vspace{-2mm}
\end{figure}


 {\setlength{\parskip}{0.4em}In light of the above discussion, this paper proposes a novel incentive mechanism for FL with \emph{randomized partial client participation},   where the server motivates clients to join FL with different participation levels (probabilities) as shown in Fig.~\ref{model_fig}. 
Specifically, we model the interaction 
between the server and clients as a sequential two-stage \emph{Stackelberg game}, where the server (leader) decides a \emph{customized pricing} scheme for each client's participation level to  maximize the global model performance. 
Each client (follower) then responds to the server's monetary incentive by independently 
choosing its best participating level to maximize its own profit,  based on 
the incurred local cost as well as the intrinsic value for the global model.} 

{\setlength{\parskip}{0.1em}We first develop an adaptive model aggregation scheme 
to guarantee the model unbiasedness for arbitrary independent client participation levels. With the unbiased design, we obtain a new FL  convergence bound, which analytically establishes the relationship between the global model performance and clients’ arbitrary participation levels with unbalanced and non-i.i.d. data {without actually training the model.}}

{\setlength{\parskip}{0.1em}Then, based on the convergence bound, we obtain the Stackelberg Equilibrium (SE) via solving a non-convex optimization
problem, which reveals interesting  design principles. Counter-intuitively, we show that in order to achieve an unbiased global model, the server will set higher prices for clients who have larger local costs. For the clients, 
those with higher intrinsic values may receive lower prices (compensations) from the server and choose lower participation
levels. When the clients’ intrinsic values are higher than a certain threshold, the clients may need to pay for the server for participation, which leads to a \emph{bi-directional payment}.} 

{\setlength{\parskip}{0.3em}Finally, we evaluate the result of our proposed game with both real and synthetic datasets on a hardware prototype. Experimental results demonstrate the superiority of our proposed mechanism in achieving higher global model performance for the server as well as higher profits for the clients. For example, for MNIST dataset, our proposed pricing spends $69\%$ less time than the baseline uniform pricing for reaching the same target accuracy under the same budget.}

{In summary, our key contributions are as follows: }
\begin{itemize}
\item {\emph{Unbiased Incentive Mechanism for FL with Randomized Client Participation:} 
To the best of our knowledge, 
we are the first to study 
FL incentive mechanism with practically randomized client participation that guarantees the unbiasedness of the obtained global model towards full client participation.}

\item {\setlength{\parskip}{0.3em}{\emph{Convergence Bound for Evaluating Clients’ Contribution:} 
We obtain a new FL convergence upper bound for 
clients’ arbitrary participation levels with  unbalanced and non-i.i.d. data, which allows us to effectively measure each client's contribution to the global model performance.} }


\item {\setlength{\parskip}{0.3em}{\emph{Intrinsic Value Design:} 
We prove that the intrinsic
value can lead to a flexible bi-directional payment between the server and clients. As far as we know, this is the first work that models the impact of clients’ intrinsic value in the FL incentive mechanism.}}

\item {\setlength{\parskip}{0.3em}\emph{Insightful Equilibrium Properties:} 

We obtain insightful design principles that characterize the impact of the server's and clients' system parameters. In particular, 
%
 our analysis reveals a counter-intuitive  result that the sever may set higher prices for clients who have larger local costs in order to guarantee the model unbiasedness. }

\item {\setlength{\parskip}{0.3em}\emph{Experimental Evaluation with Hardware Prototype}: We provide hardware experimental results that 
demonstrate the superiority of our proposed mechanism 
with a $69\%$ time savings compared to baseline uniform pricing for MNIST dataset under the same budget.} 

\end{itemize}

\section{Related Works}
{Incentive mechanism design for FL has received significant attention over recent years (for a comprehensive review please refer to \cite{tu2021incentive}). 
In the literature,  existing works  mainly focus on designing incentives mechanism  with full client participation (e.g.,  \cite{sarikaya2019motivating,khan2020federated,9488705, zhan2020incentive}). }

{\setlength{\parskip}{0.3em}Considering the server's limited budget and clients' heterogeneous local resources,
%
recent FL incentive mechanisms (e.g., \cite{lim2020hierarchical,kang2019incentive,ding2020optimal}) have considered more 
practical partial client participation scenarios.
However, these works 
mainly stimulate a \emph{deterministic} subset of “valuable'' clients  
and train the global model only with these clients' data. 
Although such mechanisms may speed up the training process at a certain level and have merits in eliciting clients' private information, they 
may result in a biased global model if the selected clients have skewed data. 
We propose to incentivize all clients to join FL with independent participation probabilities, which enables the server to obtain all clients' data contributions 
to guarantee the model's unbiasedness.} 


{\setlength{\parskip}{0.3em}In addition, most existing incentive mechanisms evaluate clients' contribution based on their data quantity (e.g.,  \cite{lim2020hierarchical,kang2019incentive, zeng2020fmore,ding2020optimal,jiao2020toward}) or computation/communication resources (e.g.,  \cite{jiao2020toward,sarikaya2019motivating,khan2020federated,zhan2020incentive,ding2020optimal,ng2020joint,le2021incentive,deng2021fair}), which do not capture the critical heterogeneous data distribution across clients. Our mechanism measures clients' contribution via a theoretical convergence result for general statistical heterogeneity with unbalanced and non-i.i.d. data.} 


The organization of the rest of the paper is as follows.
Section III introduces federated learning and game model.
Section IV presents our new error-convergence bound
with randomized client participation level. Section V analyzes  the equilibrium of the proposed game 
 and solution insights. Section VI
provides hardware-based cross-device  experimental results.
We conclude this paper in Section VII.

\section{Federated Learning  and Game Model}
\label{sec:systemModel}
We first introduce the basics of FL with partial client participation and the proposed unbiased and independent client participation scheme in Section~\ref{2.1}. Then, we  present the incentive mechanism design for  client participation level (probability), and describe the decision problems for the server and clients, respectively in Section~\ref{2.2}. Finally,    Section~\ref{2.4} presents the strategic interactions between the server and clients with the 
proposed Stackelberg game as well as the challenges in solving the game. 


\subsection{FL with Randomized Partial Client Participation}
\label{2.1}
We consider a typical FL scenario, where a central server wants to learn a model based on the data from a set  $\mathcal{N}=\{1,\ldots, N\}$ clients. Each client $n \in \mathcal{N}$ has $d_n$ data samples,  ($\boldsymbol{x}_{n, 1}, \ldots, \boldsymbol{x}_{n, d_{n}}$), which is distributed in a non-i.i.d. fashion. 
Define ${{f}\left( \boldsymbol{w}; \boldsymbol{x}_{n, i} \right)}$ as the loss function,  indicating how the machine learning model parameter $\boldsymbol{w}$ performs on the data sample $\boldsymbol{x}_{n, i}$. Thus, the local loss function of client $n$ is
\begin{equation}
\label{lo_ob}
{F_n}\left( \boldsymbol{w} \right) := \frac{1}{{{d_n}}}\sum\nolimits_{i =1}^{d_n} {{f}\left( \boldsymbol{w}; \boldsymbol{x}_{n, i} \right)}.
\end{equation}
We further denote $a_n={d_n}/{\sum\nolimits_{n = 1}^N \!d_n}$ as the weight of the $n$-th device, where $\sum\nolimits_{n = 1}^N a_n=1$. By denoting $F\left( \boldsymbol{w} \right)$ as the global loss function, the goal of FL is to solve the following optimization problem 
\cite{kairouz2019advances}:
\begin{equation}
\label{gl_ob}
\min_{\boldsymbol{w}}  F\left( \boldsymbol{w} \right) :=\sum\nolimits_{n = 1}^N{a_n}{F_n}\left( \boldsymbol{w} \right).
\end{equation}


Due to limited system bandwidth and clients' diverse availability, the most popular and de facto optimization algorithm to solve \eqref{gl_ob} is FedAvg \cite{mcmahan2017communication}. 
In FedAvg, the server randomly samples a fraction of $K$ clients (known as \emph{partial client participation}) in each round, and each selected client performs multiple (e.g., $E$) steps of local stochastic gradient descent (SGD) iterations on \eqref{lo_ob}. Then, the server 
aggregates their resulting local model updates periodically for a given deadline of  $R$ rounds or until the global loss \eqref{gl_ob} converges.

However, considering that clients in FL are independent decision-makers, 
each {client $n$ can  
decide its own participation level (probability)} $q_n$, instead of using the sampling probability decided by the server. 
Nevertheless, \emph{without a careful algorithm design, the obtained global model can be severely biased due to statistical heterogeneity}. 

In the following, 
before we present our incentive mechanism design,  we first propose an unbiased model aggregation scheme under an arbitrary independent client participation level $\boldsymbol{q}\!=\!\{q_1, \ldots, q_N\}$, such that the obtained model based on our mechanism is unbiased towards full client participation.
We define the weighted aggregated model with full client participation for any round $r$ as
    $\overline{\boldsymbol{w}}^{r+1}:=\sum\nolimits_{n=1}^Na_n\boldsymbol{w}_n^{r+1}$.
With this, we have the following result. 

{\setlength{\parskip}{0.3em}\begin{lemma}
\label{adaptive_sam_agg}
{(Unbiased FL with  Independent Client Participation Level)} 
For clients under an arbitrary  participation level $\boldsymbol{q}$, 
we aggregate the participants' local updates as 
\begin{equation}
   \label{aggregation}
 \boldsymbol{w}^{r+1} \leftarrow \boldsymbol{w}^{r}+\sum_{n \in \mathcal{S}(\boldsymbol{q})^{r}} \frac{a_{n}}{q_{n}} \left(\boldsymbol{w}_n^{r+1}-\boldsymbol{w}^{r}\right),
\end{equation}
where $\mathcal{S}(\boldsymbol{q})^{r}$ is  the participating client set in round $r$. Then, we have  
\begin{equation}
    \label{unbiased_agg}
    \mathbb{E}_{\mathcal{S}(\boldsymbol{q})^{r}}[\boldsymbol{w}^{r+1}]= \overline{\boldsymbol{w}}^{r+1}.
\end{equation}
\end{lemma}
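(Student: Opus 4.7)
The plan is to recognize that this is an importance-sampling (inverse-probability-weighting) identity. Because each client $n$ is included in $\mathcal{S}(\boldsymbol{q})^r$ independently with probability $q_n$, I would first rewrite the random sum over $\mathcal{S}(\boldsymbol{q})^r$ as a deterministic sum over all $n \in \mathcal{N}$ weighted by Bernoulli indicator variables $\mathbb{1}_n^r := \mathbb{1}\{n \in \mathcal{S}(\boldsymbol{q})^r\}$, which satisfy $\mathbb{E}[\mathbb{1}_n^r] = q_n$. This converts the update rule \eqref{aggregation} into
\begin{equation*}
\boldsymbol{w}^{r+1} = \boldsymbol{w}^{r} + \sum_{n=1}^{N} \mathbb{1}_n^r \, \frac{a_n}{q_n}\bigl(\boldsymbol{w}_n^{r+1} - \boldsymbol{w}^r\bigr).
\end{equation*}

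Next I would take the expectation with respect to $\mathcal{S}(\boldsymbol{q})^r$ on both sides. Since the local updates $\boldsymbol{w}_n^{r+1}$ and the previous global model $\boldsymbol{w}^r$ are deterministic given the past (the randomness being averaged out here is only the current-round participation), linearity of expectation gives
\begin{equation*}
\mathbb{E}_{\mathcal{S}(\boldsymbol{q})^r}[\boldsymbol{w}^{r+1}] = \boldsymbol{w}^{r} + \sum_{n=1}^{N} q_n \cdot \frac{a_n}{q_n}\bigl(\boldsymbol{w}_n^{r+1} - \boldsymbol{w}^r\bigr) = \boldsymbol{w}^{r} + \sum_{n=1}^{N} a_n\bigl(\boldsymbol{w}_n^{r+1} - \boldsymbol{w}^r\bigr).
\end{equation*}
The key cancellation is that the reweighting factor $a_n/q_n$ exactly compensates for the probability $q_n$ of participation, which is the whole point of the design.

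Finally, invoking $\sum_{n=1}^{N} a_n = 1$ collapses the $\boldsymbol{w}^r$ terms and yields $\mathbb{E}_{\mathcal{S}(\boldsymbol{q})^r}[\boldsymbol{w}^{r+1}] = \sum_{n=1}^N a_n \boldsymbol{w}_n^{r+1} = \overline{\boldsymbol{w}}^{r+1}$, which is exactly \eqref{unbiased_agg}. There is no real obstacle here; the only subtlety worth flagging is the independence assumption on the participation decisions across clients (so that marginal probabilities $q_n$ suffice and no joint distribution of $\mathcal{S}(\boldsymbol{q})^r$ needs to be specified) and the implicit conditioning on $\boldsymbol{w}^r$ and the local updates $\{\boldsymbol{w}_n^{r+1}\}$ when taking the expectation. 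The argument is therefore essentially a single-line verification of unbiasedness for a Horvitz--Thompson-style estimator applied to the FedAvg aggregation.
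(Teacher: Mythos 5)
Your proof is correct and matches the paper's own argument in Appendix~A: both pass the expectation inside the sum (you via explicit Bernoulli indicators with $\mathbb{E}[\mathbb{1}_n^r]=q_n$, the paper by directly replacing the random sum with $\sum_{n=1}^N q_n \frac{a_n}{q_n}(\cdot)$), cancel $q_n$ against $a_n/q_n$, and invoke $\sum_{n=1}^N a_n = 1$. Your explicit remarks about conditioning on $\boldsymbol{w}^r$ and the independence of participation decisions are a slightly more careful rendering of the same one-line importance-sampling verification.
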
}
\begin{proof}
We give the proof in Appendix~A. 
\end{proof}

{\setlength{\parskip}{0.1em}\textbf{Remark:}
{The interpretation
of our {aggregation} scheme is similar to that of \emph{importance sampling}. More specifically, we inversely re-weight  the participant's updated model gradient in the aggregation step (e.g.,  $\frac{1}{q_n}$ for client $n$),  such that the aggregated  model is unbiased towards the true update with full client participation}. {{{We note that simply inversely weighting the model updates from the sampled clients does not yield an unbiased global model, i.e. $ \Expect_{\mathcal{K}(\boldsymbol{q})^{(r)}}[\sum_{i \in \mathcal{K}(\boldsymbol{q})^{(r)}} \frac{p_{i}}{K q_{i}}\mathbf{w}_i^{(r+1)}] \neq \overline{\mathbf{w}}^{(r+1)}$, as the equality holds only when clients are sampled uniformly at random, i.e., ${q_i}=1/N$.}}} Particularly, when $q_n=1$ for all $n$, $\mathcal{S}(\boldsymbol{q})^{r}$ is the full set with all $N$ clients, and $\boldsymbol{w}^{r+1}$ in \eqref{aggregation}  recovers  $\overline{\boldsymbol{w}}^{r+1}$. {Nevertheless, unlike most active sampling schemes (e.g.,  \cite{yang2021achieving,li2019convergence, qu2020federated,luo2022tackling}) where clients' sampling probabilities $q_n^{\text{sam}}$ are dependent with $\sum_{n=1}^Nq_n^{\text{sam}}=1$, 
the clients' participation levels ${q_n}$ in our model is independent from  each other with the sum $\sum_{n=1}^Nq_n$ varying between $0$ to~$N$.}} 


\subsection{Mechanism Design for Randomized Client Participation
}\label{2.2} 
 
As clients are independent decision-makers, we will 
explore the impact of 
\emph{clients' independent
participation levels (probabilities)} $\boldsymbol{q}=\{q_1, \ldots, q_N\}$ in the incentive mechanism design. 
Specifically, under a  limited payment budget, the server designs a customized
pricing scheme for each client’s participation level to maximize
the model performance. 
In the following, 
we present the decision problems for the server and the clients.

\vspace{1mm}
\subsubsection{\textbf{Server's  Decision 
Problem}} 
\label{2.21}

The goal of the server is 
to minimize the training loss defined in \eqref{gl_ob}  
for a certain number of training rounds.  
To achieve this, the server imposes a set of prices $\boldsymbol{P}=\{P_1, \ldots, P_N\}$ to  incentivize each client's independent participation level (probability) $\boldsymbol{q}=\{q_1, \ldots, q_N\}$ under a  payment budget $B$, where $P_i$ represents the price per unit of client $i$'s participating level. Hence, the payment for each client $n$ is $P_nq_n$. 
 
Let us denote $\boldsymbol{w}^{R}(\boldsymbol{q})$ as the obtained model after $R$ rounds when clients participate with level $\boldsymbol{q}$ under pricing strategy $\boldsymbol{P}$. 
We can formulate the server's  problem as the following \textbf{P1}:
\begin{subequations}\label{U_s}
\begin{align}
 \textbf{P1:} \ \ \min_{\boldsymbol{P}} \  & U_s(\boldsymbol{P, q}):= \mathbb{E}\left[F\left(\boldsymbol{w}^{R}(\boldsymbol{q})\right)\right], \label{U_s_a}\\ 
         \text{s.t.} 
           &\ \  \sum\nolimits_{n=1}^NP_nq_n \le B.\label{U_s_b}
         \end{align}
\end{subequations}
The expectation of the objective in \eqref{U_s_a} comes from the randomness in client's participation level $\boldsymbol{q}$ and local SGD.
 Since the total budget is limited, it is important for the server to design the optimal pricing strategy $\boldsymbol{P}$ to maximize its utility. 

 \emph{\textbf{Remark}:  Unlike existing incentive mechanisms in FL where the server always pays for the clients (e.g., $P_n\ge0$, for all $n$) \cite{lim2020hierarchical,kang2019incentive,ding2020optimal,ng2020joint,le2021incentive,jiao2020toward,zeng2020fmore,deng2021fair},  we allow the payment to be bi-directional. 
 This is because, instead of purely providing training data and computing services, if some client $n$ has a high appreciation for the global model as we will show later, 
 it may be willing to pay for the server (i.e., $P_n<0$).} 
 
\vspace{1mm}
\subsubsection{\textbf{Clients' Decision Problem}}\label{2.3}


Each client's goal is to choose its participation level $q_n$ to maximize its utility function $U_n$, based on its incurred \emph{local cost} and its \emph{intrinsic value} for the global model, which we will explain next.

\emph{{Local Cost Model}}.  The cost of client $n$ involves resource consumption for model computation and communication as well as   the lost opportunity for joining other activities for monetary reward.  Intuitively, the higher participation level the higher cost will be, so we model the cost function $C_n$ as
\begin{equation}
    \label{cost}
    C_n= c_nq_n^{\tau}, \ \tau>1, \ 0\le q_n\le q_{n,\text{max}}. 
\end{equation}
The exponent  $\tau>1$ 
captures a broad class of convex cost functions, indicating an increasing rate as $q_n$ increases. 
We let $\tau=2$ for analytical tractability in the rest of the paper, which is also a standard assumption in economic models when the decision variable is constrained \cite{candogan2012optimal}, e.g., $q_n\le q_{n,\text{max}}$.
Nevertheless, we claim that our theoretical results in this paper also hold for an arbitrary $\tau>1$.  
Parameter $c_n>0$ is the local cost parameter.

\emph{{Intrinsic Value Model}}. In addition to incurring the resource cost,  clients may have intrinsic motivation to participate in FL, e.g., obtaining the powerful global model. In order to effectively model the intrinsic value, we 
denote $F({\boldsymbol{w}_n^*})$ as the loss when client $n$ applies its locally optimal model $\boldsymbol{w}_n^*$ 
into the global loss function \eqref{gl_ob}, where $\boldsymbol{w}_n^*=\arg \min_{\boldsymbol{w}}{F_{n}(\boldsymbol{w})}$ is solved using client's local training data on its local loss function ${F_{n}(\boldsymbol{w})}$  as defined in \eqref{lo_ob}. Then, we model the intrinsic value $V_n$ for client $n$ as 
\begin{equation}
    \label{intrinsic}
   V_n:= v_n\left(F({\boldsymbol{w}_n^*})-\mathbb{E}\left[F\left(\boldsymbol{w}^{R}(\boldsymbol{q})\right)\right]\right).
\end{equation} 
The value of $F({w_n^*})-\mathbb{E}\left[F\left(\boldsymbol{w}^{R}(\boldsymbol{q})\right)\right]$ represents the improvement of model performance due to the participation in FL.  Parameter $v_n\ge0$ is the preference level for the improvement, since clients may have different preferences for the same model improvement.  
Intuitively, 
given that $F({\boldsymbol{w}_n^*})$ is a constant value and independent of $\boldsymbol{q}$, a lower 
$\mathbb{E}\left[F\left(\boldsymbol{w}^{R}(\boldsymbol{q})\right)\right]$ yields a higher intrinsic value $V_n$, \emph{indicating that client $n$ has an internal drive to minimize the server's utility in \eqref{U_s_a}}.  

Based on the above, we formulate 
each client $n$'s decision problem as follows:
\begin{subequations}
\label{U_i}
\begin{align}
  \textbf{P2}:      \max_{{q_n}}    U_n({q_n, P_n})&:={P_nq_n}
-{C_n}+{V_n}
\label{U_i_a}\\
     \text{s.t.}  \   \quad 0 &\le q_n\le q_{n,\text{max}}. 
     \end{align}
\end{subequations}

\emph{\textbf{Remark:} As we discussed in the  previous subsection, due to the existence of the intrinsic value, clients may have incentives to participate in FL even without monetary reward, i.e., $P_n=0$. In certain cases, when some client $n$ has a very high intrinsic value $v_n$, 
it is possible for client $n$ to pay for the server for participation, i.e., $P_n<0$.}  

\begin{figure}[t]
	\centering
	\includegraphics[width=8.5cm]{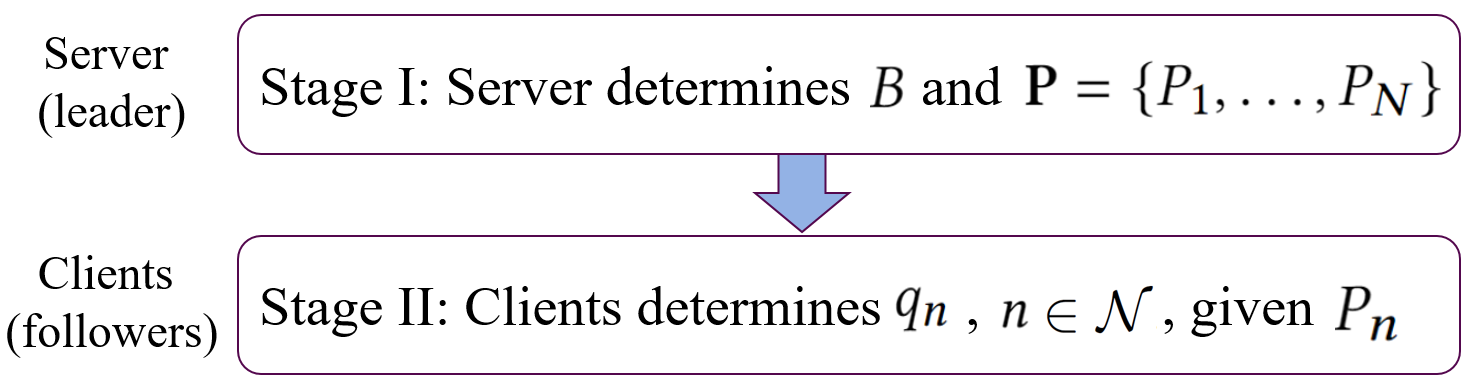}
	\caption{Stackelberg game between the server and Clients.}
	\label{stack}
	\vspace{-2mm}
\end{figure}

\subsection{Stackelberg Game Formulation}\label{2.4} 
As shown in Fig.~\ref{stack}, we model the sequential decision-making between the server and clients  as a two-stage Stackelberg game  \cite{bacsar1998dynamic},  
where the server acts as the Stackelberg leader and decides the pricing variables $\boldsymbol{P}=\{P_1, \ldots, P_N\}$ to minimize its utility defined in \eqref{U_s_a} in Stage~I. 
Then, given the server's pricing strategy $\boldsymbol{P}$, each client acts as a Stackelberg follower and chooses its reactive participation level $q_n$ 
to maximize its utility defined in \eqref{U_i_a} in Stage~II. 
In the following, \emph{we refer to the proposed  Stackelberg game as Client Participation Level Game (CPL Game)}.\footnote{{{Our mechanism assumes complete information, focusing on evaluating clients’ contribution and designing an effective payment scheme. For the more realistic incomplete information
scenario, we can adopt \emph{Bayesian} method to model and analyze the performance similarly with a higher complexity.}}}

\vspace{1mm}
\subsubsection{\textbf{Solution Concept of the Proposed CPL Game}} 

The common solution concept of the  CPL Game is 
Stackelberg equilibrium (SE), which we define as follows.
\begin{definition}
    The Stackelberg equilibrium (SE) 
   of the CPL Game is a set of decisions $\{\boldsymbol{P}^{{\text{SE}}},\boldsymbol{q}^{\text{SE}}\}$ satisfying 
\begin{subequations}
\begin{align}
 q_n^{\text{SE}}(\boldsymbol{P}) &= \arg \max_{q_{n}(\boldsymbol{P})} {U_{n}}\left(q_n(\boldsymbol{P})\right), \forall n \in \mathcal{N},  \label{Nash_1} \\ 
\boldsymbol{P}^{\text{SE}}& = \arg \min_{\boldsymbol{P}} U_{s}\left(\boldsymbol{P},\boldsymbol{q}^{\text{SE}}(\boldsymbol{P})\right).\label{Nash_2} 
\end{align}
\end{subequations}
\end{definition}
At a SE, neither the server or the clients has incentive to deviate 
for better choice. 
A powerful technique to obtain SE is backward induction \cite{Microeconomic}, where we first solve for clients' decision-making $\boldsymbol{q(P)}$ given the server’s pricing scheme $\boldsymbol{P}$ in Stage II, and then move back to Stage I to determine the server's pricing strategy $\boldsymbol{P}$.


\vspace{1mm}
\subsubsection{\textbf{Challenges in Solving  the CPL Game}}\label{challenge} 
Solving the CPL Game, however, is challenging due to \emph{the lack of an 
analytical expression of $\mathbb{E}[F\left(\boldsymbol{w}^R(\boldsymbol{q})\right)]$ to characterize the impact of $\boldsymbol{q}$.} 
Hence, 
it is difficult for the server to evaluate the clients' contributions and make an efficient pricing decision.
Moreover, in general, without actually training the model, it is impossible to find out how $\boldsymbol{q}$ affects the final model $\boldsymbol{w}^R(\boldsymbol{q})$ and the corresponding  loss $\mathbb{E}[F\left(\boldsymbol{w}^R(\boldsymbol{q})\right)]$.

We will show how to address the above challenge in Section~\ref{sec:convergence}, 
and then we will find the SE solution for the CPL Game in Section~\ref{CPL_analysis}.

\section{
Convergence Analysis for Randomized Client Participation Level}
\label{sec:convergence}
In this section, we address the challenge mentioned in Section \ref{challenge} by deriving a new tractable error-convergence bound.  The convergence bound  establishes the analytical relationship between $\boldsymbol{q}$ and $\mathbb{E}[F\left(\boldsymbol{w}^R(\boldsymbol{q})\right)]$, which allows us to 
approximate 
$\mathbb{E}[F\left(\boldsymbol{w}^R(\boldsymbol{q})\right)]$ in  $U_s$ and $U_n$ and analytically solve the PCL Game.

\subsection{Key Assumptions}
\begin{assumption}
For each client $n \in \mathcal{N}$, $F_{n}$ is  $L$-smooth and $\mu$-strongly convex.
\end{assumption}
\begin{assumption}
For each client $n \in \mathcal{N}$, the stochastic gradient of $F_{n}$ is unbiased with its variance bounded by $\sigma_n^{2}$. 
\end{assumption}
\begin{assumption}
For each client $n \in \mathcal{N}$, the expected squared norm of its stochastic gradient is bounded by $G_n^{2}$.
\end{assumption}
Assumptions 1 and 2 are 
commonly made in many existing studies of  convex FL problems, such as  $\ell_{2}$-norm regularized linear regression, logistic regression (e.g., \cite{li2019convergence,yu2018parallel,chen2020optimal,stich2018local,qu2020federated,cho2020client}).  
Nevertheless, in Assumption~3, we make an 
assumption for each client n with the bound $G_n$ instead of the bound $G$ for all the clients as in 
\cite{li2019convergence,yu2018parallel,chen2020optimal,stich2018local,qu2020federated,cho2020client}. {This is because if clients’ data are i.i,d., then $G_n$ would be the same across the clients since each client locally performs SGD from the same data distribution. However, when clients have non-i.i.d. data distribution, the values of $G_n$ would be different, which not only characterizes the data heterogeneity in our convergence result but also yields a more accurate pricing design (as we will show in Section~\ref{CPL_analysis}). In practice, we can estimate $G_n$ by letting the participated clients send back their actual local stochastic gradient norms computed along the trajectory of the model updates. }

\subsection{{Bounded Model Variance} 
}
We first present 
the introduced variance of the aggregated model $\boldsymbol{w}^{r+1}$ in \eqref{aggregation} due to randomized  client participation. 
\begin{lemma}\label{thevariance}
The variance between the aggregated  model $\boldsymbol{w}^{r+1}$ in \eqref{aggregation}  and the  global model with full participation $\overline{\boldsymbol{w}}^{r+1}$ is bounded as
\begin{equation}
    \label{variance}
     \begin{array}{l}           \!\!\mathbb{E}_{\mathcal{S}(\boldsymbol{q})^{r}}\!\left\|\boldsymbol{w}^{r\!+\!1}\!-\!\overline{\boldsymbol{w}}^{r\!+\!1}\right\|^{2} \leq 4\!\sum\limits_{n=1}^N\!\dfrac{\left(1\!-\!q_n\right)a_n^2G_n^{2}}{q_n}\left({\eta^{r}} E\right)^{2}.\!\!\!
        \end{array}
\end{equation}
\end{lemma}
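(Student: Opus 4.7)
The plan is to represent $\boldsymbol{w}^{r+1}-\overline{\boldsymbol{w}}^{r+1}$ as a sum of independent, mean-zero contributions indexed by the clients, then exploit independence to collapse the second moment into a diagonal sum, and finally bound each individual local drift by $(\eta^r E)^2 G_n^2$ via Assumption~3.

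First I would introduce the participation indicators $\mathbb{1}_n^r := \mathbbm{1}\{n \in \mathcal{S}(\boldsymbol{q})^r\}$, which are mutually independent across $n$ with $\mathbb{E}[\mathbb{1}_n^r]=q_n$ and $\mathrm{Var}(\mathbb{1}_n^r/q_n)=(1-q_n)/q_n$. Rewriting the aggregation rule \eqref{aggregation} in indicator form and using $\sum_n a_n = 1$ (so that $\overline{\boldsymbol{w}}^{r+1}-\boldsymbol{w}^r=\sum_n a_n\Delta_n^r$ with $\Delta_n^r:=\boldsymbol{w}_n^{r+1}-\boldsymbol{w}^r$) yields the clean representation
\begin{equation*}
\boldsymbol{w}^{r+1}-\overline{\boldsymbol{w}}^{r+1} \;=\; \sum_{n=1}^{N} a_n\!\left(\frac{\mathbb{1}_n^r}{q_n}-1\right)\Delta_n^r .
\end{equation*}
This is the key structural identity: the error is a linear combination of the local drifts with independent, zero-mean Bernoulli-type coefficients.

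Next, I would square and take $\mathbb{E}_{\mathcal{S}(\boldsymbol{q})^r}[\cdot]$. Because the coefficients are independent across clients and mean zero, all cross terms vanish, and the diagonal terms give $(1-q_n)/q_n$. To obtain the precise factor of $4$ appearing in \eqref{variance}, I would then apply Young's/Jensen's inequality $\|a-b\|^2\le 2\|a\|^2+2\|b\|^2$ in the decomposition of $\Delta_n^r$ relative to the deterministic and stochastic components produced by the $E$ local SGD steps (this is where the constant $4=2\cdot 2$ enters).

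Finally I would control $\|\Delta_n^r\|^2$. Writing the local update as a telescoping sum $\boldsymbol{w}_n^{r+1}-\boldsymbol{w}^r=-\eta^r\sum_{t=0}^{E-1} g_n^{r,t}$ of stochastic gradients, a Cauchy--Schwarz step gives $\|\Delta_n^r\|^2\le (\eta^r)^2 E\sum_{t=0}^{E-1}\|g_n^{r,t}\|^2$, and Assumption~3 then yields $\mathbb{E}\|\Delta_n^r\|^2 \le (\eta^r E)^2 G_n^2$. Substituting this into the diagonal second-moment expression produces exactly the right-hand side of \eqref{variance}.

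The hard part will be the bookkeeping of the expectations, since $\Delta_n^r$ is itself random through local SGD while the indicators are random through participation. I would handle this by noting that participation and local stochastic gradients are independent conditional on $\boldsymbol{w}^r$, so the variance decomposition across clients remains valid after interchanging the two expectations; only then does the $G_n^2$ bound apply cleanly term by term. Making this independence explicit, and being careful with the factor $4$ introduced by $\|a-b\|^2\le 2\|a\|^2+2\|b\|^2$, are the two places where a sloppy argument could lose the stated constant.
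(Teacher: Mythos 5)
Your proof is correct and follows essentially the same route as the paper, which only sketches the argument as an adaptation of Lemma~5 of \cite{li2019convergence} to independent (non-normalized) participation probabilities: your indicator representation $\boldsymbol{w}^{r+1}-\overline{\boldsymbol{w}}^{r+1}=\sum_n a_n(\mathbb{1}_n^r/q_n-1)\Delta_n^r$, the vanishing of cross terms by independence, and the diagonal variance $\tfrac{1-q_n}{q_n}$ are exactly the modification the authors allude to. One small correction: the factor $4$ does not come from Young's inequality on a deterministic/stochastic split of $\Delta_n^r$ (your exact variance identity makes that step unnecessary); in the Li et al.\ lineage it arises from bounding the per-step learning rates within a round by twice the round-end rate, i.e.\ $\eta_{t}\le 2\eta^{r}$ so $\eta_t^2\le 4(\eta^r)^2$ --- and since your direct bound $\mathbb{E}\|\Delta_n^r\|^2\le(\eta^r E)^2G_n^2$ is only tighter than $4(\eta^r E)^2G_n^2$, the stated inequality follows either way.
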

\noindent\emph{Proof  Sketch.}
The proof follows a similar argument  of Lemma~5 in \cite{li2019convergence}, except that our client participation levels ${q_n}$ are independent among each other, which does not satisfy the assumption of $\sum_{n=1}^Nq_n\!=\!1$ made in \cite{li2019convergence}. 
\qquad \qquad   \  \qquad \qedsymbol

{\setlength{\parskip}{0.3em}\textbf{Remark}: When $q_n=1$ for all $n$, the variance in \eqref{variance} is equal to zero, since the aggregated model $\boldsymbol{w}^{r+1}$ in the left hand side of \eqref{variance} is the same as 
$\overline{\boldsymbol{w}}^{r+1}$ because $q_n = 1$ implies full participation.}   

\subsection{{Main Convergence Result}}

\begin{theorem}
\label{convergencebound}
{(Convergence Upper Bound under an Arbitrary $\boldsymbol{q}$)} 
Consider any given client participation level $\boldsymbol{q}=\{q_1, \ldots, q_N\}$ and the unbiased aggregation in Lemma~\ref{adaptive_sam_agg}, if we choose   
the decaying learning rate $\eta_{r}=\frac{2}{\max \{{8 L}, \mu E\}+\mu r}$,  
the optimality gap after $R$ rounds satisfies 
\begin{equation}
    \label{convergence}
    \begin{array}{c}
    \mathbb{E}[F\!\left(\boldsymbol{w}^R(\boldsymbol{q})\right)]\!-\!F^{*} \!\le\dfrac{1}{R}\!
     \left(\alpha{\sum\limits_{n=1}^N\dfrac{\left(1\!-\!q_n\right)a_n^2G_n^{2}}{q_n}}\! +\!\beta\right)\!,\!\!
    \end{array}
\end{equation}
where $\alpha=\frac{8LE}{\mu^2}$,  $\beta=\frac{2L}{\mu^2E}A_0+\frac{12L^2}{\mu^2E}\Gamma+ \frac{4L^2}{\mu E}\left\|\boldsymbol{w}_{0}-\boldsymbol{w}^{*}\right\|^{2}$,  $A_0\!=\!\sum\limits_{n=1}^{N}\! a_{n}^{2} \sigma_{n}^{2}\!+\!8\sum\limits_{n=1}^Na_nG_n^2(E\!-\!1)^{2}$, and $\Gamma\!=\!F^{*}\!-\!\sum_{n=1}^{N} a_{n} F_{n}^{*}$.  
\end{theorem}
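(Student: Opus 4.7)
The plan is to adapt the standard virtual-sequence argument of FedAvg-type analyses (notably Li et al.\ 2019) to our independent-participation setting. I track two sequences in parallel: the realized aggregate $\boldsymbol{w}^{r+1}$ from \eqref{aggregation} and the full-participation virtual average $\overline{\boldsymbol{w}}^{r+1}$. The target is to prove $\mathbb{E}\|\boldsymbol{w}^R - \boldsymbol{w}^*\|^2 = O(1/R)$ with an explicit leading constant that is linear in $\sum_n (1-q_n)a_n^2 G_n^2 / q_n$, and then convert to the optimality-gap bound via $L$-smoothness, absorbing the remaining constants into $\alpha$ and $\beta$.

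The argument has four steps. (i) Using unbiasedness from Lemma~\ref{adaptive_sam_agg}, I split $\mathbb{E}\|\boldsymbol{w}^{r+1} - \boldsymbol{w}^*\|^2 = \mathbb{E}\|\boldsymbol{w}^{r+1} - \overline{\boldsymbol{w}}^{r+1}\|^2 + \mathbb{E}\|\overline{\boldsymbol{w}}^{r+1} - \boldsymbol{w}^*\|^2$, since the cross-term vanishes in expectation over the participation set. The first piece is controlled directly by Lemma~\ref{thevariance}, contributing a term proportional to $\eta_r^2 E^2 \sum_n (1-q_n)a_n^2 G_n^2/q_n$. (ii) For the second piece, standard manipulations under Assumptions 1--3---strong convexity yielding contraction by $(1-\eta_r\mu)$, together with the bounded SGD variance $\sigma_n^2$, the heterogeneity gap $\Gamma$, and the intra-round drift of local iterates away from their virtual average (which introduces the $(E-1)^2$ factor inside $A_0$)---give $\mathbb{E}\|\overline{\boldsymbol{w}}^{r+1} - \boldsymbol{w}^*\|^2 \leq (1-\eta_r\mu)\Delta_r + \eta_r^2 C$, where $\Delta_r := \mathbb{E}\|\boldsymbol{w}^r - \boldsymbol{w}^*\|^2$ and $C$ is linear in $A_0$ and $\Gamma$. (iii) Assembling yields a one-step recursion $\Delta_{r+1} \leq (1-\eta_r\mu)\Delta_r + \eta_r^2 B(\boldsymbol{q})$; for the prescribed step size $\eta_r = 2/(\max\{8L,\mu E\}+\mu r)$, a standard induction (as in Theorem~1 of Li et al.\ 2019) gives $\Delta_r \leq v/(\max\{8L,\mu E\}/\mu + r)$ for an explicit $v$ that is linear in $B(\boldsymbol{q})$ and $\|\boldsymbol{w}_0-\boldsymbol{w}^*\|^2$. (iv) Apply $L$-smoothness, $\mathbb{E}[F(\boldsymbol{w}^R)] - F^* \leq (L/2)\Delta_R$, and collect constants to reproduce $\alpha = 8LE/\mu^2$ and $\beta$ as stated.

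The main obstacle is step (ii): because Assumption~3 provides per-client bounds $G_n$ rather than a uniform $G$, every lemma in the standard chain controlling intra-round drift, cross-client inconsistency, and SGD noise must be re-derived with the $G_n$ kept \emph{separate} and weighted by $a_n$ throughout, so that the per-client term $8\sum_n a_n G_n^2 (E-1)^2$ in $A_0$ actually emerges rather than being collapsed into a single constant. A second subtlety is that, unlike the correlated multinomial sampling in Li et al.\ 2019 where $\sum_n q_n^{\text{sam}}=1$, our participation indicators are mutually independent Bernoullis with $\sum_n q_n$ arbitrary, so the cross-term cancellation in step (i) must be re-verified under the reweighting $a_n/q_n$ in \eqref{aggregation}; one also has to accept, via Lemma~\ref{thevariance}, that the bound grows like $1/q_n$ as any $q_n \to 0^+$---a divergence that is admissible for the theorem statement but will be important to guard against when formulating the downstream pricing problem in Section~\ref{CPL_analysis}.
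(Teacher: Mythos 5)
Your proposal is correct and follows essentially the same route as the paper's proof: both adapt the virtual-sequence analysis of Li et al.\ (2019), using the unbiasedness of the aggregation (Lemma~1) so the cross-term vanishes, Lemma~2 to bound the participation-induced variance, an induction on the one-step recursion under the prescribed decaying step size, and finally $L$-smoothness to convert the distance bound into the optimality gap. The subtleties you flag---keeping the per-client $G_n$ separate so that $A_0$ emerges, and re-verifying the decomposition under independent Bernoulli participation with the $a_n/q_n$ reweighting---are precisely where the paper's argument departs from the sum-to-one sampling setting it builds on.
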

\begin{proof}
We give the proof in Appendix B.
\end{proof}

We summarize the key insights of Theorem~\ref{convergencebound} as follows:
\begin{itemize}
    \item The  bound in \eqref{convergence} is \emph{valid for  arbitrary independent client participating level $\boldsymbol{q}$ 
  and variant number of participating clients in each round} (i.e., $\sum_{n=1}^Nq_n$ can vary between $0$ to $N$), 
which is the key difference from existing  convergence results with active client sampling \cite{luo2022tackling,yang2021achieving, li2019convergence,qu2020federated,rizk2020federated,chen2020optimal,cho2020client}. 

\item The bound in \eqref{convergence} characterizes how randomized partial client participation (i.e., $q_n<1$) worsens the convergence rate compared to full client participation. 
It also indicates that in order to obtain an unbiased global model, all clients need to participate with non-zero probability for model convergence, i.e.,  $q_{n}>0$, for all $n$. This is because when $q_{n}\rightarrow0$, it will take infinite number of rounds $R$ for convergence. Our bound also explains why only incentivizing and training part of the clients in existing mechanisms (e.g., \cite{lim2020hierarchical,kang2019incentive,ding2020optimal,ng2020joint,le2021incentive,jiao2020toward,zeng2020fmore,deng2021fair}) may fail to converge to the optimal global model.  
\item The convergence bound in \eqref{convergence} \emph{establishes the relationship} between the expected loss function $\mathbb{E}[F\left(\boldsymbol{w}^R(\boldsymbol{q})\right)]$, clients’ participation level $\boldsymbol{q}$, and their heterogeneous data  $a_nG_n$. In other words, how clients' unbalanced data ($a_n$) and non-i.i.d. data distribution ($G_n$) affect the model training. This not only provides analytical utility expressions for the server $U_s$ and clients $U_n$, but also enables an efficient pricing strategy as we will show in Section~\ref{CPL_analysis}.


\end{itemize}

{The use of  $G_i$ in our convergence analysis (instead of more accurate instantaneous gradient norms) is mainly due to the challenging ``chicken and egg'' problem. Specifically,  before the model has been fully trained, it is generally impossible to know exactly how different FL configurations (e.g., clients'  different participation levels in this paper) affect the FL performance.  Therefore, to optimize the FL performance, we need to have a lightweight surrogate that can (approximately) predict what will happen if we choose a specific configuration. A common surrogate used for this purpose is the convergence upper bound, which has become a common practice in the communications and networking community (e.g., \cite{wang2019adaptive, luo2022tackling, cui2022optimal, shi2020device, perazzone2022communication, luo2020cost}). Moreover, our experiments in Section~VI demonstrate the superiority of our designed
pricing scheme based on the convergence bound, in terms
of achieving higher global model performance and higher client
profits compared to baseline schemes.}

 \section{Stackelberg Equilibrium Analysis}
\label{CPL_analysis}
In this section, 
we use the obtained convergence bound in Theorem~\ref{convergencebound} to approximate $\mathbb{E}[F\left(\boldsymbol{w}^R(\boldsymbol{q})\right)]$ in the server's utility $U_s$  and clients' utility $U_n$, and solve the CPL Game via backward induction.  We first solve clients' decision-making $\boldsymbol{q}$ given the server’s pricing scheme $\boldsymbol{P}$ in Stage II, and then move back to Stage I to determine 
$\boldsymbol{P}$. Finally, we characterize the key  insights of the optimal solution. 

\subsection{Client's Decision at Stage II}
We approximate  $\mathbb{E}[F\left(\boldsymbol{w}^R(\boldsymbol{q})\right)]$ with the convergence bound in \eqref{convergence}, and rewrite  client $n$'s problem given the server's pricing strategy $\boldsymbol{P}$ as the following Problem \textbf{P2}$^{\prime}$:
\begin{subequations}\label{U_i2}
\begin{align}
\!\!\!\!\!\!   & \textbf{P2}^{\prime}:   \  \max_{{q_n}}   \ \ U_n({\boldsymbol{q}, P_n})={P_nq_n}-{c_nq_n^2} \notag
        \\ &+{v_n\left[F({w_n^*})\!-\!F^{*}\! -\!\frac{1}{R}
     \!\left(\alpha{\sum\limits_{n=1}^N\frac{\left(1\!-\!q_n\right)a_n^2G_n^{2}}{q_n}} \!+\!\beta\right)\right]},\label{U_i2_a}\\ 
   & \qquad \text{s.t.}     \quad 0 \le q_n\le q_{n,\text{max}}.  \label{U_i2_b}
     \end{align}
  \end{subequations}
We observe that the objective function  \eqref{U_i2_a} is concave in $q_n$.  
Along with the linear constraints in \eqref{U_i2_b}, we conclude that Problem \textbf{P2}$^{\prime}$ is \emph{concave}. 
Therefore, the optimal solution of Problem  \textbf{P2}$^{\prime}$ is unique,  
which is the best choice of $q_n$ to maximize its own utility.

Based on the first order condition, 
the optimal choice of $q_n^*(P_n)$ for client $n$ satisfies
 \begin{equation}\label{U_i_analysis}
\begin{aligned}
         P_n+v_n\frac{\alpha}{R}
           \frac{a_n^2G_n^2}{q_n^{*2}}-2c_nq_n^*=0. 
     \end{aligned}
  \end{equation}
Although the closed form solution of $q_n^*(P_n)$ is complicated because \eqref{U_i_analysis} is a cubic equation, we can show that 
\emph{$q_n^*(P_n)$ is a monotonically increasing convex function in $P_n$}. 

Based on the client's optimal solution in \eqref{U_i_analysis}, we move to Stage I of the CPL game. 

\subsection{Server's Decision at Stage I}

In Stage I, the server chooses its pricing strategy based on all the clients' best responses. In other words, the server substitutes $q_n^*(P_n)$ into its utility function in \eqref{U_s} for obtaining the optimal price vector  $\boldsymbol{P}$ under the budget constraint $B$. 

However, since the analytical expression of $q_n(P_n)$ is complicated, it is difficult to obtain the optimal $\boldsymbol{P}$ in the server's decision problem. As $q_n^\ast (P_n)$ is unique, we can write its inverse function based on \eqref{U_i_analysis}, i.e., 
$P_n(q_n)= 2c_nq_n-{a_n^2G_n^2}/{q_n^2}$. Then, we substitute this expression of $P_n(q_n)$ into the server's utility function and solve Stage I problem in \eqref{U_s}. 
Therefore, with the obtained convergence bound and the constraints of $q_n$, we rewrite problem in \eqref{U_s} as 
the following Problem \textbf{P1}$^{\prime}$:
\begin{subequations}
\label{U_s2}
\begin{align}
\textbf{P1}^{\prime}: \ \min_{\boldsymbol{q}} \ &   
F^{*} + \frac{1}{R}\left(\alpha \sum_{n=1}^{N} \frac{(1-q_n)a_{n}^{2} G_{n}^{2}}{q_{n}}+\beta\right) \label{U_s2_a}\\
         \text{s.t.} 
          \ \  &\sum_{n=1}^N\left(2c_nq_n-\frac{\alpha}{R}\frac{v_na_n^2G_n^2}{q_n^2}\right) q_n\le B, \label{U_s2_b}\\
           &\  0\le q_n\le q_{n,\text{max}}. \label{U_s2_c}       
\end{align}
\end{subequations}

Although the objective function in \eqref{U_s2_a} is convex in $q_n$, the budget constraint in \eqref{U_s2_b} is not convex 
in $q_n$. 
Thus,  Problem \textbf{P1}$^{\prime}$ is \emph{non-convex}.
To efficiently solve Problem \textbf{P1}$^{\prime}$,  we define a new control variable 
\begin{equation}
    \label{new_M}
    M:=\sum\nolimits_{n=1}^Nc_nq_n^2,
\end{equation}
where 
$0 \le M \le\sum_{n=1}^Nc_n$.
Then, we rewrite Problem \textbf{P1}$^{\prime}$ as the following Problem \textbf{P1}$^{\prime\prime}$:\footnote{We omit constants $F^{*}$ and $\beta$ in the objective of \textbf{P1}$^{\prime}$ for simplicity.}
\begin{equation}
\label{p3}
\begin{aligned}
\!\!\!\! \textbf{P1}^{\prime\prime}:  \quad \min_{\boldsymbol{q},M} \    &g\left(\boldsymbol{q},M\right):=   \frac{\alpha}{R}\sum_{i=1}^{N} \frac{(1-q_n)a_{n}^{2} G_{n}^{2}}{q_{n}}\\
 \text{s.t.} \ \  \ \  &    2M-\sum_{n=1}^N\frac{\alpha}{R}
 \frac{v_na_n^2G_n^2}{q_n} \le B,  \\ &\sum\nolimits_{n=1}^Nc_nq_n^2=M, \ \ 0 \le q_n\le q_{n,\text{max}}. 
\end{aligned}
\end{equation}
For any fixed feasible value of $M$, Problem \textbf{P1}$^{\prime\prime}$ is convex because the objective function and the constraints are convex. 
Hence, we can approximately solve Problem \textbf{P1}$^{\prime\prime}$ in two steps. First, for any fixed $M$, we solve for the optimal  $\boldsymbol{q}^*(M)$ in Problem \textbf{P1}$^{\prime\prime}$ 
via a convex optimization tool, e.g., CVX \cite{boyd2004convex}.  This allows us to write the objective function of Problem \textbf{P1}$^{\prime\prime}$ as $g(\boldsymbol{q}^\ast(M), M)$. Then we will further solve the problem by using 
a linear search method  with a fixed step-size $\epsilon_0$ over the interval $\left[0, \sum_{n=1}^Nc_n{q_{n,\text{max}}^2}
\right]$, which leads to $\boldsymbol{q}^*(M^*(\epsilon_0))=\arg\min_{M(\epsilon_0)}\boldsymbol{q}^*(M(\epsilon_0))$. 


Once we obtain $\boldsymbol{q}^*$ from Problem \textbf{P1}$^{\prime\prime}$, we immediately have the optimal  price $\boldsymbol{P}^*$ via \eqref{U_i_analysis} as follows,
\begin{equation}
\label{optimal_P}
    P_n^*=2c_nq_n^*-v_n\frac{\alpha}{R}
    \frac{a_n^2G_n^2}{{q_n^*}^2}.
\end{equation}

Finally, 
we conclude that the obtained solution pair $\{\boldsymbol{q}^*, \boldsymbol{P}^*\}$ based on backward induction are the SE $\{\boldsymbol{P}^{{\text{SE}}},\boldsymbol{q}^{{\text{SE}}}\}$ for the proposed CPL game \cite{bacsar1998dynamic}. 
Notably, based on the relationship between $q_n^{{\text{SE}}}$ and $P_n^{{\text{SE}}}$ in \eqref{optimal_P}, we highlight that 
clients with low participation level $q_n^{{\text{SE}}}$ will receive a low price $P_n^{{\text{SE}}}$, which leads to a low payment  $P_n^{{\text{SE}}}q_n^{{\text{SE}}}$ from the server. In particular, if these clients with low participation levels also have high intrinsic value $v_n$,  they may need to pay for the server, i.e., $P_n^{\text{SE}}<0$, which we show in the next section.


\subsection{Properties of SE in the CPL Game}

This subsection presents some interesting properties of the obtained SE  $\{\boldsymbol{P}^{\text{SE}},\boldsymbol{q}^{\text{SE}}\}$. 
Before that, we first give the following lemma which leads to our property analysis.
\begin{lemma}
\label{ineq_to_eq}
At the SE of the CPL Game, the server's budget constraint is tight. 
\end{lemma}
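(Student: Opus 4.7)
The plan is to argue by contradiction. Suppose that at the SE the budget constraint is strictly slack, i.e., $\sum_{n=1}^N P_n^{\text{SE}} q_n^{\text{SE}} < B$, and exhibit a feasible perturbation that strictly decreases the server's objective. Since the SE is obtained via backward induction by solving Problem \textbf{P1}$^{\prime\prime}$ (equivalently \textbf{P1}$^{\prime}$) in Stage I after substituting each client's best response $P_n(q_n)=2c_nq_n-\frac{\alpha}{R}\frac{v_na_n^2G_n^2}{q_n^2}$, it suffices to show tightness of the budget constraint \eqref{U_s2_b} at the optimum of Problem \textbf{P1}$^{\prime}$.

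First I would establish two monotonicity facts. The server's objective $\frac{1}{R}\bigl(\alpha \sum_{n} \tfrac{(1-q_n)a_n^2G_n^2}{q_n}+\beta\bigr)+F^*$ can be rewritten as $\frac{\alpha}{R}\sum_{n} a_n^2G_n^2\bigl(\tfrac{1}{q_n}-1\bigr)+\text{const}$, so it is \emph{strictly decreasing} in every $q_n$ on $(0,q_{n,\max}]$. On the other hand, the left-hand side of the budget constraint \eqref{U_s2_b} equals $\sum_{n}\bigl(2c_nq_n^2-\tfrac{\alpha}{R}\tfrac{v_na_n^2G_n^2}{q_n}\bigr)$; its partial derivative with respect to $q_n$ is $4c_nq_n+\tfrac{\alpha}{R}\tfrac{v_na_n^2G_n^2}{q_n^2}>0$, so the total payment is \emph{strictly increasing} in every $q_n$.

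Next I combine these facts to derive a contradiction. Assume the SE lies in the non-degenerate regime where at least one coordinate is interior, i.e.\ there exists $n$ with $q_n^{\text{SE}}<q_{n,\max}$ (the complementary situation where every $q_n^{\text{SE}}=q_{n,\max}$ is the trivial case in which the budget is not the binding constraint, and by monotonicity of the server's objective the server could not do strictly better anyway). If the budget were strictly slack, then by continuity one can choose $\delta>0$ small enough that increasing $q_n^{\text{SE}}$ by $\delta$ keeps the payment below $B$ and preserves $q_n^{\text{SE}}+\delta\le q_{n,\max}$. By the first monotonicity property, this perturbation strictly decreases the server's objective, and by the inverse relation \eqref{optimal_P} the corresponding price $P_n$ is adjusted so that the client still best-responds with the new $q_n$. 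This contradicts the optimality of $\boldsymbol{q}^{\text{SE}}$ in Stage I, hence the budget constraint must be tight.

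The main obstacle I expect is \emph{not} the perturbation argument itself, which is clean once the two monotonicity properties are in hand, but cleanly handling the boundary regime $q_n^{\text{SE}}=q_{n,\max}$ for all $n$. I would address this either by stating the implicit regularity assumption that the budget $B$ is small enough to be the active constraint (consistent with the paper's motivation that the server has a limited budget), or by noting that in the degenerate boundary case one can equivalently redefine $B$ to equal the incurred payment so that the constraint is vacuously tight; either framing preserves the lemma as stated and enables its use in the subsequent structural analysis of the SE.
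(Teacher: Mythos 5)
Your proof is correct and takes the same route the paper indicates: the paper only sketches its argument as ``proof by contradiction, omitted for page limitation,'' and your perturbation argument --- payment strictly increasing in each $q_n$ (the derivative $4c_nq_n+\tfrac{\alpha}{R}\tfrac{v_na_n^2G_n^2}{q_n^2}>0$ is exactly the computation the paper reuses in Proposition~1) versus the objective strictly decreasing in each $q_n$ --- is the natural instantiation of that sketch. Your explicit handling of the degenerate all-$q_{n,\max}$ boundary case is a reasonable addition that the lemma statement itself glosses over.
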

{\setlength{\parskip}{0.4em}
\noindent\emph{Proof  Sketch.}
 The idea of this proof is to use contradiction, which we omit due to page limitation. 
\qquad \qquad \qquad  \qquad \qedsymbol}

\vspace{1mm}
\subsubsection{\textbf{Impact of the server's budget $B$}} 
We first show the impact of the server's budget $B$ on SE  $\{\boldsymbol{P}^{\text{SE}},\boldsymbol{q}^{\text{SE}}\}$. 
\newtheorem{Proposition}{{Proposition}} \begin{Proposition}
\label{property_B}
 Both $\boldsymbol{P}^{\text{SE}}$ and $\boldsymbol{q}^{\text{SE}}$ increase in  
 budget $B$.   
 \end{Proposition}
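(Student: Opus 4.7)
The plan is to establish monotonicity of $\boldsymbol{q}^{\text{SE}}$ in $B$ first, then propagate it to $\boldsymbol{P}^{\text{SE}}$ through the inverse best-response $P_n=2c_n q_n-v_n\frac{\alpha}{R}\frac{a_n^2G_n^2}{q_n^2}$. For the analysis I would work with the $M$-eliminated form of Problem \textbf{P1}$^{\prime\prime}$,
\begin{equation*}
\min_{\boldsymbol{q}}\ g(\boldsymbol{q})\quad\text{s.t.}\quad h(\boldsymbol{q})\le B,\ 0\le q_n\le q_{n,\max},
\end{equation*}
where $g(\boldsymbol{q})=\frac{\alpha}{R}\sum_n\frac{(1-q_n)a_n^2G_n^2}{q_n}$ is separable and componentwise strictly decreasing in each $q_n$, and $h(\boldsymbol{q})=\sum_n\bigl(2c_nq_n^2-\frac{\alpha v_n a_n^2G_n^2}{Rq_n}\bigr)$ is separable and componentwise strictly increasing in each $q_n>0$. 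By Lemma~\ref{ineq_to_eq}, $h(\boldsymbol{q}^{\text{SE}})=B$ at any SE, so the problem reduces to comparative statics on a tight separable constraint.

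The central step is to parameterize the SE by the Lagrange multiplier $\lambda\ge 0$ of the budget constraint and to show that $\lambda^{\text{SE}}$ is strictly decreasing in $B$. For any client interior to its box, the KKT stationarity $\partial_{q_n}(g+\lambda h)=0$ simplifies algebraically to
\begin{equation*}
q_n^{\text{SE}}(\lambda)=\left[\frac{\alpha a_n^2G_n^2(1-\lambda v_n)}{4\lambda R c_n}\right]^{1/3},
\end{equation*}
which is well-defined for $\lambda\in(0,1/v_n)$ and, since $(1-\lambda v_n)/\lambda=1/\lambda-v_n$ is strictly decreasing in $\lambda$, is itself \emph{strictly decreasing} in $\lambda$. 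For clients pinned at the upper bound $q_{n,\max}$, $q_n^{\text{SE}}(\lambda)$ is locally constant until $\lambda$ drops enough for the interior branch to exceed $q_{n,\max}$; overall the map $\lambda\mapsto q_n^{\text{SE}}(\lambda)$ is weakly decreasing for every~$n$.

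Substituting $q_n^{\text{SE}}(\lambda)$ into the tight budget identity $h(\boldsymbol{q}^{\text{SE}}(\lambda))=B$ defines $\lambda^{\text{SE}}$ implicitly. Because $h$ is componentwise strictly increasing in each $q_n$ and each $q_n^{\text{SE}}(\lambda)$ is weakly (strictly, for interior clients) decreasing in $\lambda$, the composite $\lambda\mapsto h(\boldsymbol{q}^{\text{SE}}(\lambda))$ is strictly decreasing on its effective range. Hence raising $B$ strictly lowers $\lambda^{\text{SE}}$, and the chain of monotonicities yields $q_n^{\text{SE}}$ weakly (strictly for interior clients) increasing in $B$. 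Finally, differentiating the price formula gives $\frac{dP_n}{dq_n}=2c_n+2v_n\frac{\alpha a_n^2G_n^2}{R q_n^3}>0$, so monotonicity of $q_n^{\text{SE}}$ transfers to $P_n^{\text{SE}}$, completing the proposition.

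The main obstacle is the non-convexity of Problem \textbf{P1}$^{\prime\prime}$ (flagged by the authors right after its formulation): KKT stationarity is only necessary, so $\boldsymbol{q}^{\text{SE}}(\lambda)$ might not be the unique critical point globally. I would handle this by arguing along the SE branch actually selected by the two-step procedure of Section~\ref{CPL_analysis} (convex subproblem in $\boldsymbol{q}$ for fixed $M$, then line search over $M$), whose output depends continuously on $B$; the strict monotonicity of $h\circ\boldsymbol{q}^{\text{SE}}(\cdot)$ then uniquely pins down $\lambda^{\text{SE}}(B)$ on that branch. Transitions in which a client switches between the interior regime and the $q_{n,\max}$ cap as $B$ varies are routine piecewise book-keeping, and monotonicity is preserved across such transitions because the cap can only become non-binding as $B$ grows.
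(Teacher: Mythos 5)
Your proof is correct, and it rests on the same two ingredients as the paper's: the tight budget constraint from Lemma~\ref{ineq_to_eq}, and the monotone increase of $P_n$ in $q_n$ via the inverse best-response \eqref{optimal_P}. The difference is in how the first step is justified. The paper simply computes $\partial B/\partial q_n^* = 4c_nq_n^* + \frac{\alpha}{R}\frac{v_na_n^2G_n^2}{{q_n^*}^2} > 0$ from the tight budget identity and concludes that $q_n^*$ increases in $B$; since $B$ depends on \emph{all} components of $\boldsymbol{q}$, this inverse-function step is not by itself airtight — it tacitly assumes all $q_n^*$ move in the same direction. Your route through the Lagrange multiplier supplies exactly the missing justification: the KKT stationarity gives $q_n^{\text{SE}}$ as an explicit decreasing function of the single scalar $\lambda$, the tight budget then pins $\lambda$ as a strictly decreasing function of $B$, and the common multiplier is what forces all participation levels to rise together. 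So your argument is a more careful version of the same proof rather than a different one; what it buys is rigor in the multivariate comparative statics, at the cost of some bookkeeping about boundary regimes. One small slip in that bookkeeping: as $B$ grows the interior branch $q_n^{\text{SE}}(\lambda)$ \emph{increases}, so the cap $q_{n,\max}$ can only become \emph{binding} (not non-binding) as $B$ grows; this reverses the direction you state but does not affect the conclusion, since a capped $q_n^{\text{SE}} = q_{n,\max}$ is still weakly increasing in $B$.
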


\begin{proof}
Given $B\!=\!\sum\limits_{n=1}^N\left(2c_n{q_n^*}^2-\!\frac{\alpha}{R}\frac{v_na_n^2G_n^2}{q_n^*}\right)$ from Lemma~\ref{ineq_to_eq}, we have ${\partial B}/{\partial q_n^*}=4c_nq_n^*+\frac{\alpha}{R}\frac{v_na_n^2G_n^2}{q_n^*}^2>0$. Thus, $q_n^*$ increases in $B$. Next, based on the {monotonically increasing} relationship between $q_n^*$ and $P_n^*$ in \eqref{optimal_P}, we conclude that $P_n^*$ also increases in $B$.     
\end{proof}

Proposition~\ref{property_B} shows that when the budget $B$ is higher,  the server could increase its price $P_n$ to  incentivize higher client participation level $q_n$ to decrease the global loss for improving the model performance. Similarly, clients also have incentive to increase their participation level for more profit. 

{\setlength{\parskip}{0.3em}\subsubsection{\textbf{Impact of clients' parameters}}
This subsection shows how client's parameters, including data quality, local cost, and intrinsic value affect the values of  $\{\boldsymbol{P}^{\text{SE}},\boldsymbol{q}^{\text{SE}}\}$ at SE. 
For simplicity, we consider those clients whose equilibrium choices are in the (strict) interior of domains, i.e., $0< q_n^{\text{SE}} < q_{n,\text{max}}$. 
}

\begin{theorem}
\label{property_q}
{(Impact of clients' parameters on  $\boldsymbol{q}^{\text{SE}}$)} 
For any 
clients $i$ and $j$ whose equilibrium $q_i^{\text{SE}}$ and $q_j^{\text{SE}}$ 
are in the interior of the domains, we must have 
$\frac{c_i{q_i^*}^3}{a_i^2G_i^2}+v_i = \frac{c_j{q_j^*}^3}{a_j^2G_j^2}+v_j$. 


\end{theorem}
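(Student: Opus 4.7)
The plan is to derive the claim as a first-order (KKT) stationarity condition for Problem \textbf{P1}$^{\prime}$. Lemma~\ref{ineq_to_eq} guarantees that the budget constraint binds at the SE, so I may treat it as an equality and attach a single Lagrange multiplier $\lambda$ to it. Because both $q_i^{\text{SE}}$ and $q_j^{\text{SE}}$ lie strictly in the interior of $[0,q_{n,\text{max}}]$, complementary slackness annihilates the corresponding box-constraint multipliers, and stationarity at each such index reduces to a single scalar equation in which only the budget multiplier appears.

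Concretely, I would first write the Lagrangian
\[
L(\boldsymbol{q},\lambda) = \frac{\alpha}{R}\sum_{n=1}^{N}\frac{(1-q_n)a_n^2 G_n^2}{q_n} + \lambda\!\left[\sum_{n=1}^{N}\!\left(2c_n q_n^2 - \frac{\alpha v_n a_n^2 G_n^2}{R q_n}\right) - B\right],
\]
and then compute $\partial L/\partial q_i = 0$ for any interior index $i$. A direct differentiation yields a contribution $-\alpha a_i^2 G_i^2/(R q_i^2)$ from the objective and $\lambda\bigl(4 c_i q_i + \alpha v_i a_i^2 G_i^2/(R q_i^2)\bigr)$ from the budget term. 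Multiplying the resulting equation by $R q_i^2/(\alpha a_i^2 G_i^2)$ groups the pieces so that a fixed multiple of $c_i q_i^{*3}/(a_i^2 G_i^2)$ plus a fixed multiple of $v_i$ equals a constant that does not depend on $i$. Writing the same identity for the interior index $j$ and subtracting then eliminates the common right-hand side and delivers the asserted equality (with the universal scaling factors absorbed into the statement).

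The main obstacle is justifying that $\lambda > 0$, which is what actually couples the interior clients through the budget. This follows because the objective \eqref{U_s2_a} is strictly decreasing in each $q_n$: if $\lambda = 0$, the unconstrained minimizer would push every component to its upper bound $q_{n,\text{max}}$, contradicting the assumption that $q_i^{\text{SE}}$ and $q_j^{\text{SE}}$ are strictly interior; hence the budget multiplier must be strictly positive and the division used above is legitimate. Once $\lambda > 0$ is in hand, the remaining work is purely the elementary differentiation and rescaling sketched above, together with careful bookkeeping of the constants $\alpha$, $R$, and $\lambda$.
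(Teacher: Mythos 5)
Your proposal is correct and follows essentially the same route as the paper's Appendix~C: form the Lagrangian of Problem \textbf{P1}$^{\prime}$, invoke the KKT stationarity condition (which remains necessary despite non-convexity), use complementary slackness to drop the box-constraint multipliers at interior points, and rescale the resulting equation so that $\frac{4R}{\alpha}\frac{c_n {q_n^*}^3}{a_n^2 G_n^2}+v_n$ equals the $n$-independent constant $1/\lambda^*$. Your explicit justification that $\lambda^*>0$ is a small but worthwhile addition that the paper leaves implicit.
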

\begin{proof}
We give the proof in Appendix C. 
\end{proof}
We summarize the key insights of Theorem~\ref{property_q} as follows:
\begin{itemize}
 \item  \emph{(Heterogeneous Data quality)}  
 Clients with 
 larger  $a_nG_n$ (e.g., large data size and gradient norm upper-bound) 
 have higher participation level $q_n^{\text{SE}}$, given the same  parameters $c_n$ and  $v_n$ among clients.   
 

\item \emph{(Local cost)} 
Clients with large local cost parameter $c_n$  have lower participate level  $q_n^{\text{SE}}$, given the same parameters $a_nG_n$ and $v_n$ among clients. 

\item \emph{(Intrinsic value)} It is \emph{counter-intuitive} that a client with a larger intrinsic value parameter $v_n$ has a lower participation rate $q_n^{\text{SE}}$, given the same $a_nG_n$ and $c_n$ among clients. This is because although a higher $q_n$ benefits its intrinsic value if its $v_n$ is large, the server will set a lower price $P_n$ (as we show in the next theorem) which yields a smaller payment $P_nq_n$. Considering that the larger $q_n$ also incurs larger cost $c_nq_n^2$, $q_n$ should not be large. 


\end{itemize}

\begin{theorem}
\label{property_P}
{(Impact of  clients' parameters on $\boldsymbol{P}^{\text{SE}}$)} 
For any client $n$ whose equilibrium $q_n^{\text{SE}}$ is in the interior domain, 
we must have 
\begin{equation}
\label{optimal_P2}
    P_n^*\!=\!\left(\!\frac{2\alpha c_n^2a_n^2G_n^2}{R}\right)^{\!\frac{1}{3}}\left[\left(\frac{1}{\lambda^*}\!-\!v_n\right)^{\!\frac{1}{3}}\!\!-\!2\left(\frac{v_n^{\frac{3}{2}}}{\frac{1}{\lambda^*}\!-\!v_n}\!\right)^{\!\frac{2}{3}}\right]\!.
\end{equation}
In addition, there exists a 
threshold $v^{\text{t}}$, such that $P_n^{\text{SE}}>0$, if $v_n \le v^{\text{t}}$, and $P_n^{\text{SE}}<0$ otherwise.
\end{theorem}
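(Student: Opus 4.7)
The plan is to derive both the closed-form expression for $P_n^{\text{SE}}$ and the sign threshold from the KKT conditions of Problem \textbf{P1}$^{\prime}$. By Lemma~\ref{ineq_to_eq}, the budget constraint is tight at equilibrium, so \textbf{P1}$^{\prime}$ reduces to a problem with one equality constraint plus the box constraints $0\le q_n\le q_{n,\text{max}}$. For a client $n$ whose equilibrium sits strictly inside the box, only the budget multiplier $\lambda^\ast>0$ is active, and the KKT stationarity condition reads
\begin{equation*}
-\frac{\alpha\, a_n^2G_n^2}{R\,q_n^{\ast 2}} + \lambda^\ast\!\left(4c_n q_n^\ast + \frac{\alpha v_n a_n^2G_n^2}{R\,q_n^{\ast 2}}\right)=0,
\end{equation*}
which rearranges to $q_n^{\ast 3}=\frac{\alpha\,a_n^2G_n^2}{4R\,c_n}\!\left(\tfrac{1}{\lambda^\ast}-v_n\right)$. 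This is exactly the common-value relation of Theorem~\ref{property_q} (the common level being $\tfrac{1}{\lambda^\ast}$ after absorbing the constant $\alpha/(4R)$), and it also shows that $1/\lambda^\ast-v_n>0$ whenever $q_n^\ast>0$.

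Next I would substitute this expression for $q_n^\ast$ into the pricing identity $P_n^\ast = 2c_nq_n^\ast - v_n\frac{\alpha}{R}\frac{a_n^2G_n^2}{q_n^{\ast 2}}$ from~\eqref{optimal_P}. A direct simplification of the first term gives $2c_nq_n^\ast=\bigl(\tfrac{2\alpha c_n^2a_n^2G_n^2}{R}\bigr)^{1/3}\!\bigl(\tfrac{1}{\lambda^\ast}-v_n\bigr)^{1/3}$, while the second term becomes
\begin{equation*}
v_n\frac{\alpha}{R}\frac{a_n^2G_n^2}{q_n^{\ast 2}} = v_n\cdot 4^{2/3}\!\left(\tfrac{\alpha c_n^2 a_n^2G_n^2}{R}\right)^{\!1/3}\!\bigl(\tfrac{1}{\lambda^\ast}-v_n\bigr)^{-2/3}\!=2\!\left(\tfrac{2\alpha c_n^2a_n^2G_n^2}{R}\right)^{\!1/3}\!\frac{v_n}{\bigl(\tfrac{1}{\lambda^\ast}-v_n\bigr)^{2/3}}.
\end{equation*}
Rewriting $v_n/(\tfrac{1}{\lambda^\ast}-v_n)^{2/3}$ as $(v_n^{3/2}/(\tfrac{1}{\lambda^\ast}-v_n))^{2/3}$ and subtracting produces exactly \eqref{optimal_P2}.

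For the sign claim, the prefactor $\bigl(\tfrac{2\alpha c_n^2a_n^2G_n^2}{R}\bigr)^{1/3}$ is strictly positive, so $\mathrm{sgn}(P_n^\ast)$ matches the sign of the bracket. Multiplying through by the strictly positive quantity $(\tfrac{1}{\lambda^\ast}-v_n)^{2/3}$, the bracket becomes $\tfrac{1}{\lambda^\ast}-v_n-2v_n=\tfrac{1}{\lambda^\ast}-3v_n$. Hence $P_n^{\text{SE}}>0$ iff $v_n<\tfrac{1}{3\lambda^\ast}$ and $P_n^{\text{SE}}<0$ iff $v_n>\tfrac{1}{3\lambda^\ast}$, so the claim holds with threshold $v^{\text{t}}:=\tfrac{1}{3\lambda^\ast}$.

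The main obstacle is justifying the KKT step despite the non-convexity of \textbf{P1}$^{\prime}$: KKT is only necessary, not sufficient, so I must argue that the multiplier $\lambda^\ast$ is well-defined and common across all interior clients. I would do this by noting that the reparameterization through \eqref{new_M} reduces the search over $\boldsymbol{q}$, for any fixed $M$, to a convex program whose KKT conditions are necessary and sufficient; since the SE inherits the cubic relation above for every interior $n$ with the same multiplier, the common-$\lambda^\ast$ structure carries over to the final SE chosen by the outer line search, and the tight-budget identity from Lemma~\ref{ineq_to_eq} uniquely fixes $\lambda^\ast$. The rest of the argument is algebraic bookkeeping.
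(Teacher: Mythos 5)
Your proposal is correct and follows essentially the same route as the paper's own proof: combine the KKT stationarity condition \eqref{first_order2} with the pricing relation \eqref{optimal_P} to obtain \eqref{optimal_P2}, then identify the threshold $v^{\text{t}}=1/(3\lambda^*)$. Your algebra checks out (including the $2^{1/3}$ and $16^{1/3}$ bookkeeping), and your sign argument via multiplying the bracket by $(1/\lambda^*-v_n)^{2/3}>0$ to get $1/\lambda^*-3v_n$ is a slightly more direct version of the paper's monotonicity-plus-root argument.
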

\begin{proof}
We give the proof in Appendix D.
\end{proof}

Before we present the insights of Theorem~\ref{property_P}, we show how 
parameters $c_n$ and $a_nG_n$ affect  $P_n^{\text{SE}}$ with the following result. 
\newtheorem{Corollary}{{Corollary}}
\begin{Corollary}\label{corollary}
With the same threshold $v^{\text{t}}$ in Theorem~\ref{property_P}, and for any clients $i$ and $j$ whose equilibrium $q_i^{\text{SE}}$ and $q_j^{\text{SE}}$ are in the interior of the domains and satisfy  $c_ia_iG_i>c_ja_jG_j$, 
\begin{enumerate}
    \item 
 if $v_i<v_j<v^{\text{t}}$,  
then $P_i^{\text{SE}}>P_j^{\text{SE}}>0$;
\item  if $v_i>v_j>v^{\text{t}}$, 
then $P_i^{\text{SE}}<P_j^{\text{SE}}<0$.
\end{enumerate}
\end{Corollary}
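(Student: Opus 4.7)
The plan is to use the closed-form expression in (\ref{optimal_P2}) from Theorem~\ref{property_P} and exploit its natural multiplicative factorization. I would write $P_n^{\text{SE}} = \Phi_n \cdot \Psi_n$, where $\Phi_n := (2\alpha c_n^2 a_n^2 G_n^2 / R)^{1/3} = (2\alpha/R)^{1/3}(c_n a_n G_n)^{2/3}$ collects all the client-specific ``cost and data quality'' information, while $\Psi_n := (1/\lambda^* - v_n)^{1/3} - 2 v_n (1/\lambda^* - v_n)^{-2/3}$ depends only on the intrinsic value $v_n$ and the Lagrange multiplier $\lambda^*$ of the single (scalar) budget constraint of Problem~\textbf{P1}$^{\prime\prime}$. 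Since Lemma~\ref{ineq_to_eq} shows the budget binds at the SE, $\lambda^*$ is a single positive constant shared by every client, so $\Psi_n$ varies across clients only through $v_n$.

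Two monotonicity facts then do the work. First, $\Phi_n$ is strictly increasing in the product $c_n a_n G_n$, which is immediate from the closed form. Second, $\Psi_n$ is strictly decreasing in $v_n$ on its admissible range $0 \le v_n < 1/\lambda^*$. I would verify this by a short direct calculation: setting $\tau := 1/\lambda^* - v_n > 0$, one rewrites $\Psi_n = \tau^{1/3} - 2 v_n \tau^{-2/3}$, differentiates with respect to $v_n$ (noting $d\tau/dv_n = -1$), and collects terms to obtain $\partial \Psi_n / \partial v_n = -\tfrac{7}{3}\tau^{-2/3} - \tfrac{4}{3} v_n \tau^{-5/3} < 0$. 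Theorem~\ref{property_P} already asserts that $\Psi_n$ changes sign at the common threshold $v^{\text{t}}$, so the strict monotonicity above immediately gives $\Psi_n > 0$ whenever $v_n < v^{\text{t}}$ and $\Psi_n < 0$ whenever $v_n > v^{\text{t}}$.

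Combining these facts then resolves both cases by a sign-and-ordering argument. In case~(1), $v_i < v_j < v^{\text{t}}$ yields $\Psi_i > \Psi_j > 0$, while $c_i a_i G_i > c_j a_j G_j$ yields $\Phi_i > \Phi_j > 0$; multiplying two pairs of strictly positive ordered quantities gives $P_i^{\text{SE}} = \Phi_i \Psi_i > \Phi_j \Psi_j = P_j^{\text{SE}} > 0$. In case~(2), $v_i > v_j > v^{\text{t}}$ yields $\Psi_i < \Psi_j < 0$ (i.e., $\Psi_i$ is \emph{more} negative), and the positive prefactor ordering $\Phi_i > \Phi_j > 0$ amplifies both the magnitude and the negative sign, producing $P_i^{\text{SE}} < P_j^{\text{SE}} < 0$.

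The main obstacle I anticipate is the justification that $\lambda^*$ can legitimately be treated as a single constant independent of $n$ in the factorization $\Psi_n = \Psi(v_n; \lambda^*)$: this is precisely what forces the threshold $v^{\text{t}}$ to be universal rather than client-specific. The argument relies on the fact that Problem~\textbf{P1}$^{\prime\prime}$ has only one coupling constraint (the budget), so only one KKT multiplier enters the stationarity condition that yields (\ref{optimal_P2}). A secondary technical point is ensuring $1/\lambda^* - v_n > 0$ for each interior client, without which neither the closed form nor the derivative computation is well-defined; this is guaranteed because an interior $q_n^{\text{SE}}$ requires the cubic stationarity condition (\ref{U_i_analysis}) to admit a strictly positive root at the SE price, which in turn forces $v_n$ to remain strictly below $1/\lambda^*$.
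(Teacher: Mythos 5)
Your proposal is correct and follows exactly the route the paper intends: the paper states Corollary~\ref{corollary} without a separate proof, treating it as an immediate consequence of the factorized closed form \eqref{optimal_P2} and the fact (noted in the proof of Theorem~\ref{property_P}) that the bracketed factor is decreasing in $v_n$ and changes sign at $v^{\text{t}}=1/(3\lambda^*)$. Your write-up supplies the details the paper omits -- the explicit derivative of $\Psi_n$, the sign-and-ordering multiplication, and the observation from \eqref{first_order2} that $1/\lambda^*-v_n>0$ for interior clients -- all of which check out.
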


We summarize the key insights of Theorem~\ref{property_P} and Corollary~\ref{corollary} as follows: 
\begin{itemize}
  \item \emph{(Intrinsic value)} Our result 
  provides a quantitative criterion, 
  which \emph{indicates the payment direction of $P_n^{\text{SE}}$ between the server and the clients}.  
    
 \item \emph{(Data quality)} 
 We have shown that, no matter which payment direction, the price $P_n^{\text{SE}}$ is higher for clients who have 
 large values of $a_nG_n$, given the same parameters $c_n$ and $v_n$ among  clients.  This result demonstrates the suboptimality of existing mechanisms with uniform pricing or data quantity-based  pricing. 
 
 
  \item \emph{(Local cost)} It is \emph{counter-intuitive} that clients with large  $c_n$ will have higher price $P_n^{\text{SE}}$ when parameters $a_nG_n$ and $v_n$ are the same as other clients.  
  However, this result makes sense because a client with a large local cost $c_n$ tends to join with low participation level $q_n$, which can cause a negative impact on the server's utility. Hence,  
  to prevent this, the server will set a higher price. 

       \end{itemize}

\section{Experimental Evaluation}
\label{sec:experimentation}
In this section, we empirically evaluate the performance of our proposed mechanism with different datasets on our hardware-based cross-device FL prototype, as illustrated in Fig.~\ref{hdset}.
Our prototype consists of $N=40$ Raspberry Pis serving as clients 
and a laptop computer acting as the central server. All devices are interconnected via an enterprise-grade Wi-Fi router. We develop a TCP-based socket interface for the communication between the server and clients. 

In the following, we first present the evaluation setup and then show the experimental results.

\begin{figure}[t]
	\centering
	\includegraphics[width=7.4cm,height=4.6cm]{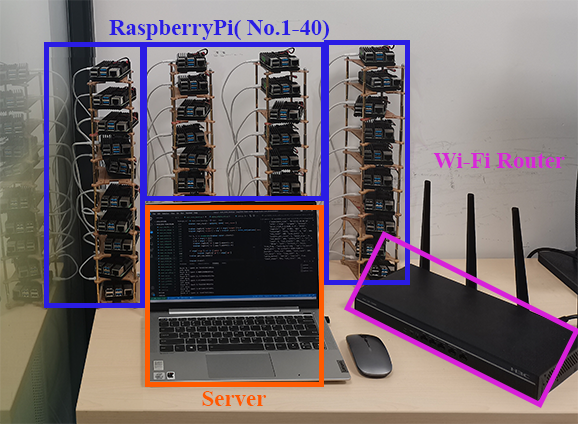}
	\caption{Cross-device FL prototype with a laptop serving as the central server and
$40$ Raspberry Pis serving as clients. 
	}
	\label{hdset}
\end{figure}

\begin{table}[ht]
\small
 \caption{System parameters for different Setups}
 \label{sys}
  \centering
  \small
    \begin{tabular}{c|c|c|c}
    \toprule
   {\!Setup$\backslash$\!\! Parameter} &  {budget $B$} &  local cost $\overline{c}$  &   intrinsic value $\overline{v}$    \bigstrut\\
    \hline
   {\makecell[c]{\textbf{Setup 1}}} &  ${200}$  & $50$   & $4,000$   
   \bigstrut\\
    \hline
   {\makecell[c]{\textbf{Setup 2}}}  & ${40}$  & $20$   & $30,000$  \bigstrut\\
    \hline
{\makecell[c]{\textbf{Setup 3}}} &  ${500}$  & $80$   & $10,000$   \bigstrut\\
    \bottomrule
    \end{tabular}%
   \label{tab1}%
\end{table}

\subsection{Experimental Setup}

\subsubsection{\textbf{Datasets and Implementations}} 
Following  similar setups as in \cite{li2019convergence,luo2022tackling}, we evaluate our results on three datasets, with detailed implementations as follows:

\begin{itemize}
    \item \textbf{Setup 1}: The first experiment uses the Synthetic dataset, which generates 60-dimensional random vectors as input
data with a  non-i.i.d.  $Synthetic \ (1, 1)$ setting. We generate $22,377$ data samples and distribute them among the devices in an \emph{unbalanced} power-law distribution.  
  
    \item   \textbf{Setup 2}: The second experiment uses MNIST dataset, where we randomly subsample 
$14,463$ data samples from MNIST and distribute them among the devices in an \emph{unbalanced} (following the power-law distribution) and \emph{non-i.i.d.}  (i.e., each device has $1$--$6$ classes) fashion.

     \item \textbf{Setup 3}: The third experiment uses EMNIST dataset, where we randomly subsample $35,155$ lower case character samples from the EMNIST dataset and distribute  among the devices in an \emph{unbalanced} (i.e., numbers of data samples at each device follows a  power-law distribution) and \emph{non-i.i.d.} fashion (i.e., each device has a randomly chosen number of classes, ranging from $1$ to $10$). 
\end{itemize}

\subsubsection{\textbf{Model and Parameters}} 
 For all experiments, we adopt the {convex} {multinomial logistic regression} model, 
 with $\boldsymbol{w}_0\!=\!\boldsymbol{0}$ and  SGD batch size $24$. We use an initial learning rate of $\eta_0 =0.1$, a decay rate of $0.996$, 
 and a local iteration number $E=100$. 
We estimate the task-related parameters $\alpha$ and data quality-related parameter $G_n$ for each setup following a similar approach as \cite{luo2022tackling}. 
We let $q_{n,max}=1$ for all $n$ and training round $R=1000$ for all three setups. Table~\ref{sys} shows the parameter settings for budget $B$, mean local cost parameter $\overline{c}$, and mean intrinsic value parameter $\overline{v}$ for each setup, with $\overline{c}$ and $\overline{v}$ following exponential distribution among clients. {We note that the parameters and results are general, and the specific parameters for a scenario can often be obtained through measurement in practice.} 

\begin{figure}[!t]
\centering
\subfigure[Loss for Setup 1 ]{\label{bench_loss1}\includegraphics[width=4.37cm, height=3.3cm]{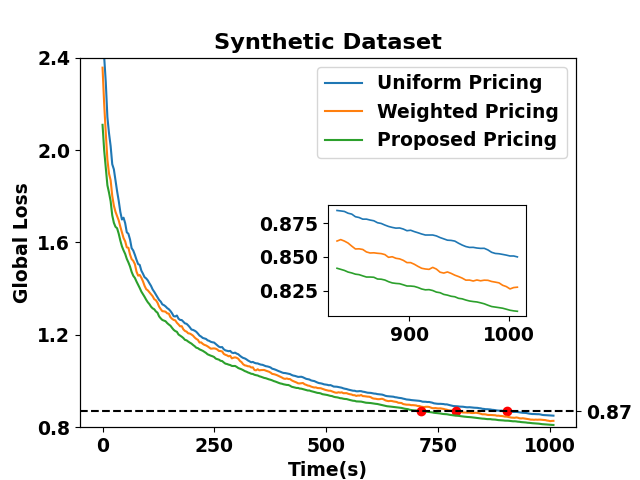}}
\subfigure[Accuracy for Setup 1 ]{\label{bench_acc1}\includegraphics[width=4.37cm, height=3.3cm]{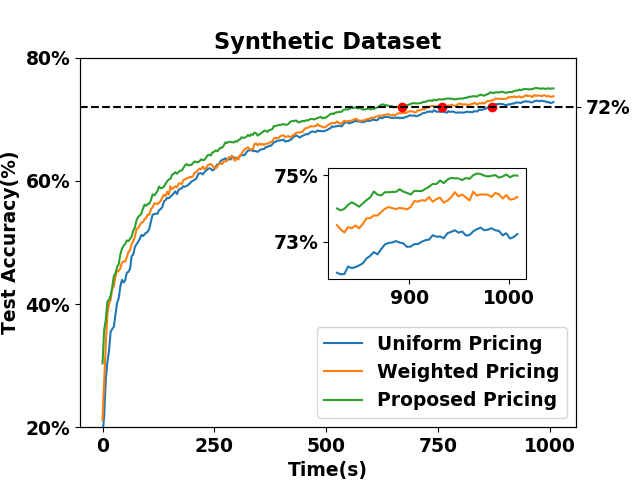}}
\subfigure[Loss for for Setup 2]{\label{bench_loss2}\includegraphics[width=4.37cm, height=3.3cm]{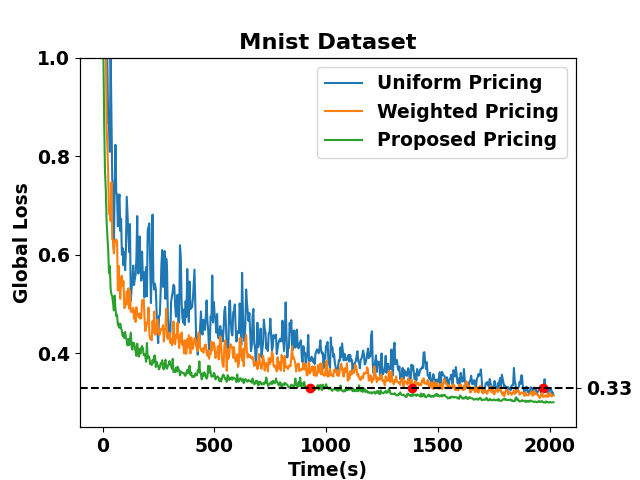}}
\subfigure[Accuracy for for Setup 2]{\label{bench_acc2}\includegraphics[width=4.37cm, height=3.3cm]{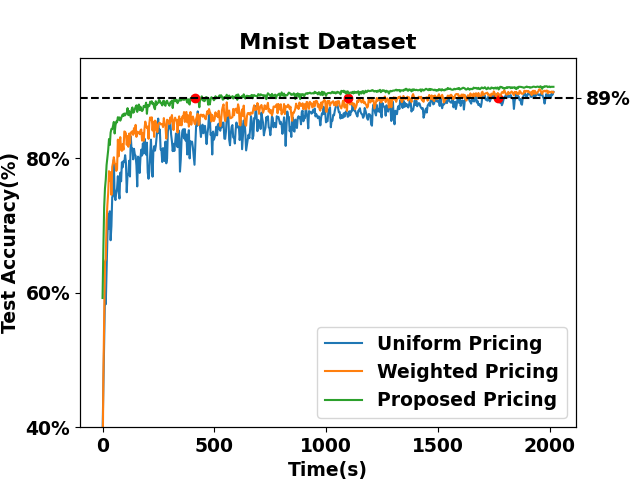}}
\subfigure[Loss for for Setup 3]{\label{bench_loss3}\includegraphics[width=4.37cm, height=3.3cm]{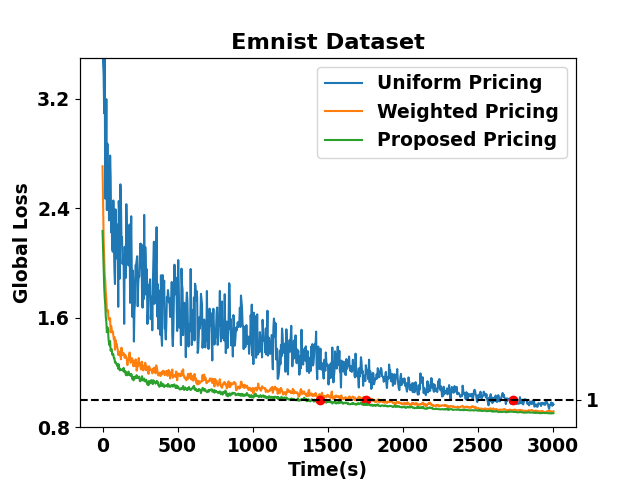}}
\subfigure[Accuracy for for Setup 3]{\label{bench_acc3}\includegraphics[width=4.37cm, height=3.3cm]{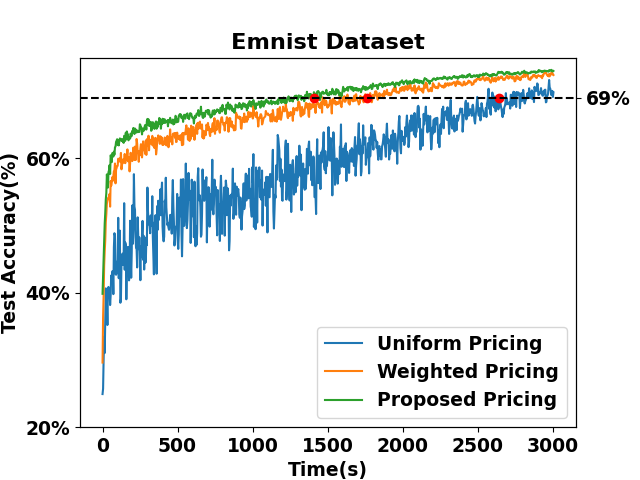}}
\caption{
Model performance of loss and accuracy for Setups $1$ -- $3$ for different pricing schemes.
}
\label{bench}
\vspace{-2mm}
\end{figure}

\subsection{Experimental Results}
We first compare the performance of our proposed optimal pricing with other benchmark \emph{pricing schemes}, and then we evaluate the \emph{impact of different system parameters} on the model performance and Equilibrium solution.  For all three setups, we average each experiment over $20$ independent runs. 

{\setlength{\parskip}{0.3em}\subsubsection{\textbf{Comparison with Different Pricing Schemes}} 
We compare our proposed optimal pricing $\boldsymbol{P}^*$ with two benchmark pricing schemes: \emph{uniform pricing} $\boldsymbol{P}^{\text{u}}$ where the server sets the same price for all devices, and \emph{weighted pricing} $\boldsymbol{P}^{\text{w}}$ where clients' prices are proportional to their datasize. 
Then, we run experiments for all three Setups using the corresponding optimal clients' participation levels $\boldsymbol{q}^*$, $\boldsymbol{q}^\text{u}$, and $\boldsymbol{q}^\text{w}$, based on the above three pricing schemes. 
Fig.~\ref{bench} illustrates the global model performance of global loss and test accuracy for different pricing schemes for Setups $1$--$3$ 
 with evaluating time $1,000$, $2,000$, and $3,000$ seconds, respectively. }
 
{\setlength{\parskip}{0.1em}\textbf{Loss.} As shown in Figs.~\ref{bench_loss1}, \ref{bench_loss2}, and \ref{bench_loss3},  our proposed optimal pricing scheme \emph{achieves lower global loss with smaller variance} throughout the evaluating process compared to the other benchmarks under the same budget. 
Specifically, for EMNIST dataset,  Fig.~\ref{bench_loss2} shows that our scheme reaches a target loss of $0.33$ using around $33.1\%$ less time than weighted pricing and around $52.9\%$ less time than the uniform pricing.  
We summarize the superior performances of loss in Table~\ref{table_loss}.} 

{\setlength{\parskip}{0.1em}\textbf{Accuracy.} Figs.~\ref{bench_acc1}, \ref{bench_acc2}, and \ref{bench_acc3} show that our proposed optimal pricing scheme \emph{achieves higher test accuracy with  smaller  variance} compared to the other benchmarks under the same budget. In particular, for MNIST dataset,  Fig.~\ref{bench_acc2} shows that our proposed pricing scheme spends around  $20.0\%$ less time  than the weighted pricing and around $46.5\%$ less time than the uniform pricing for reaching a target accuracy of $69\%$.  We summarize the superior performances of accuracy in Table~\ref{table_acc}.}

{\setlength{\parskip}{0.1em}\textbf{Clients' Utility.} Table~\ref{bench_un} shows the gain of clients' total utility of our proposed pricing  over the other two benchmarks. 
 We see that our proposed optimal pricing scheme \emph{yields higher total clients' utility} compared to the other two benchmark pricing schemes.} 

The above observations validate that our designed mechanism with optimal pricing can \emph{incentivize high-quality clients to participate with higher participation levels}  under the same budget, which benefits both the server and clients' utilities. 


\begin{table}[!t]
 \caption{Running Time for reaching the target loss (shown in Fig.~$4$) 
 for different pricing schemes 
 }
  \centering
    \begin{tabular}{c|c|c|c}
    \toprule
   {Setup $\backslash$ Pricing Schemes} &  \textbf{Proposed} &  Weighted  &   Uniform    \bigstrut\\
    \hline
   {\makecell[c]{\textbf{Setup 1}}} &  $\mathbf{711}$ \textbf{s} & $791$ s  & $903$ s  
   \bigstrut\\
    \hline
   {\makecell[c]{\textbf{Setup 2}}}  & $\mathbf{926}$ \textbf{s}  & $1,384$ s   & $1,969$ s \bigstrut\\
    \hline
{\makecell[c]{\textbf{Setup 3}}} &  $\mathbf{1,448}$ \textbf{s} & $1,758$ s   & $2,735$  s \bigstrut\\
    \bottomrule
    \end{tabular}%

 \label{table_loss}%
    \vspace{-1mm}
\end{table}

\begin{table}[!t]

 \caption{Running time for reaching the target accuracy (shown in Fig.~$4$) 
 for different pricing schemes
 }
  \centering
    \begin{tabular}{c|c|c|c}
    \toprule
   {Setup $\backslash$ \!Pricing Schemes} &  \textbf{Proposed} &  Weighted  &   Uniform    \bigstrut\\
    \hline
   {\makecell[c]{\textbf{Setup 1}}} &  $\mathbf{668}$ \textbf{s} & $759$ s  & $871$ s  
   \bigstrut\\
    \hline
   {\makecell[c]{\textbf{Setup 2}}}  & $\mathbf{411}$ \textbf{s} & $1,096$ s  & $1,767$ s \bigstrut\\
    \hline
{\makecell[c]{\textbf{Setup 3}}} &  $\mathbf{1,412}$ \textbf{s} & $1,766$ s  & $2,645$ s  \bigstrut\\
    \bottomrule
    \end{tabular}%
  \label{table_acc}%
    \vspace{-1mm}
\end{table}

\begin{table}[!t]
 \caption{Total clients' utility gain of our proposed pricing scheme over the other two benchmark pricing schemes}
 \label{bench_Un}
  \centering
    \begin{tabular}{c|c|c}
    \toprule
   {Setup $\backslash$ \! Gain} &  $\sum\limits_{n=1}^N U_n^*-\sum\limits_{n=1}^N U_n^{\text{u}}$ &  $\sum\limits_{n=1}^N U_n^*-\sum\limits_{n=1}^N U_n^{\text{w}}$    \bigstrut\\
    \hline
   {\makecell[c]{\textbf{Setup 1}}} &  ${10,839}$  & $5,175$
   \bigstrut\\
    \hline
   {\makecell[c]{\textbf{Setup 2}}}  & ${77,975}$  & $75,909$    \bigstrut\\
    \hline
{\makecell[c]{\textbf{Setup 3}}} &  ${86,200}$  & $22,413$         \bigstrut\\
    \bottomrule
    \end{tabular}%
  
  \label{bench_un}%
    \vspace{-2mm}
\end{table}

{\subsubsection{\textbf{Impact of System Parameters}}
Fig.~\ref{property_v} - Fig.~\ref{fig_property_B} show the impact of system parameters on the global model performance of our proposed mechanism  with evaluating time $600$ seconds. Due to page limit, we evaluate each   parameter with one setup.} 

{\textbf{Impact of $\overline{v}$}: Fig.~\ref{property_v} shows that as clients' mean intrinsic value $\overline{v}$ increases in Setup $1$, the obtained model achieves a lower  loss and a higher accuracy, respectively.    This observation is because when clients have more interest in the global model, they have internal motivation to participate in higher levels. 
In addition, as predicted by our theory,  we show in Table~\ref{table_property_P} that as $\overline{v}$ increases, the number of clients whose payment is negative also increases, e.g., from $0$ to $5$.}

{\textbf{Impact of $\overline{c}$}: Fig.~\ref{property_c} shows that as clients' mean local cost $\overline{c}$ decreases in Setup $2$, the obtained model achieves a lower  loss, a higher accuracy, and a  smaller variance. This  is because large cost prevents  clients from participating in higher levels.} 

{\textbf{Impact of $B$}:  Fig.~\ref{fig_property_B} shows that as budget $B$ increases in Setup $3$, the obtained  model achieves a lower loss, a higher  accuracy, and a  smaller variance.  This observation is because more budget allows more clients to participate in higher levels.}

\begin{table}[!t]
 \caption{Number of negative payment clients for different  $\overline{v}$ }
  \centering
    \begin{tabular}{c|c|c|c}
    \toprule
    {Setup $1$  (Synthetic)} &  $\overline{v}=0$  & $\overline{v}=4,000$ & $\overline{v}=80,000$
   \bigstrut\\
   \hline
Client number with $P_n<0$  & $0$  & $3$ & $5$
   \bigstrut\\
    \bottomrule
    \end{tabular}%
  
  \label{table_property_P}%
\end{table}

\begin{figure}[!t]
\centering
\subfigure[Loss  for different $\overline{v}$]{\label{property_v_loss}\includegraphics[width=4.37cm, height=3.3cm]{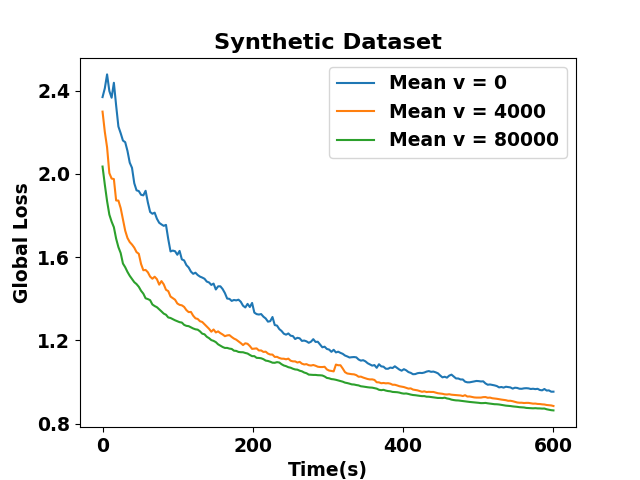}}
\subfigure[Accuracy for different $\overline{v}$]{\label{property_v_acc}\includegraphics[width=4.37cm,height=3.3cm]{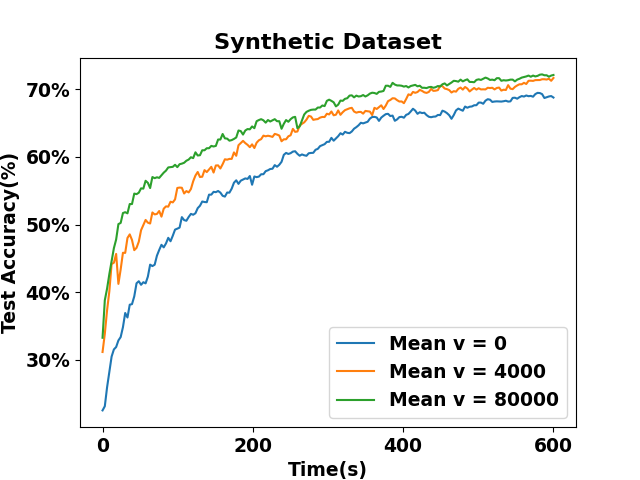}}
\caption{
Model performance  for different 
$\overline{v}$ for Setup 1.
}
\label{property_v}
\vspace{-1mm}
\end{figure}

\begin{figure}[!t]
\centering
\subfigure[Loss for different $\overline{c}$]{\label{property_c_loss}\includegraphics[width=4.37cm,height=3.3cm]{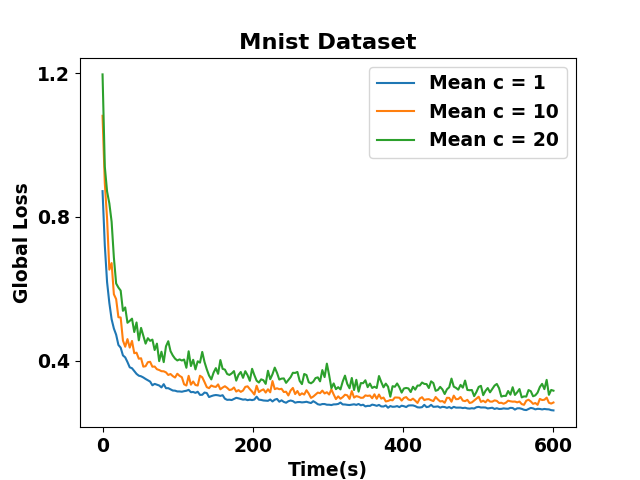}}
\subfigure[Accuracy  for different $\overline{c}$]{\label{property_c_acc}\includegraphics[width=4.37cm,height=3.3cm]{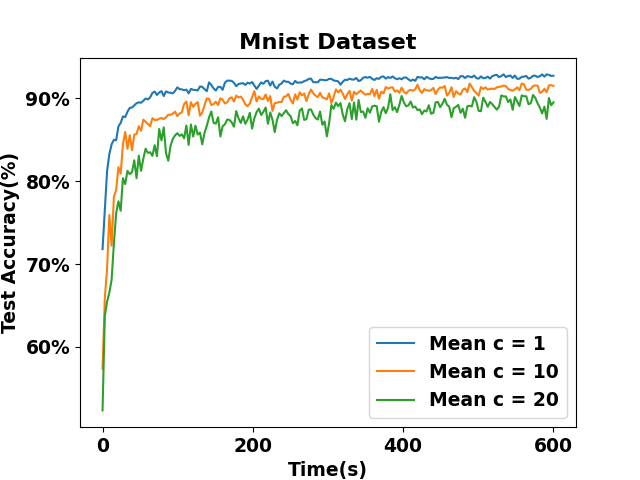}}
\caption{
Model performance for different 
$\overline{c}$ for Setup 2. 
}
\label{property_c}
\vspace{-1mm}
\end{figure}

\begin{figure}[!t]
\centering
\subfigure[Loss for different $B$]{\label{fig_property_B_loss}\includegraphics[width=4.37cm, height=3.3cm]{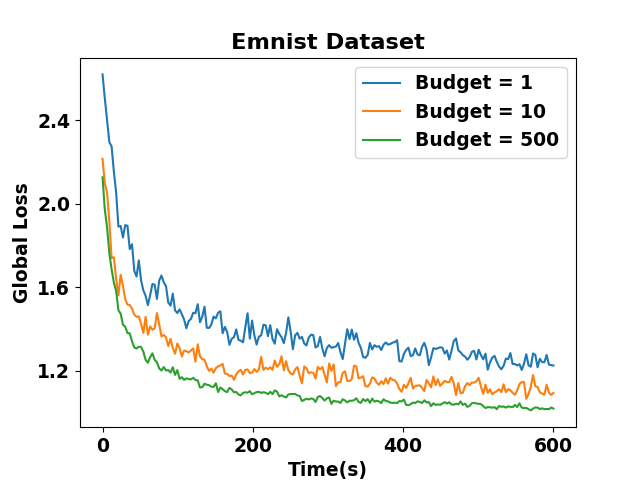}}
\subfigure[Accuracy for different $B$]{\label{fig_property_B_acc}\includegraphics[width=4.37cm, height=3.3cm]{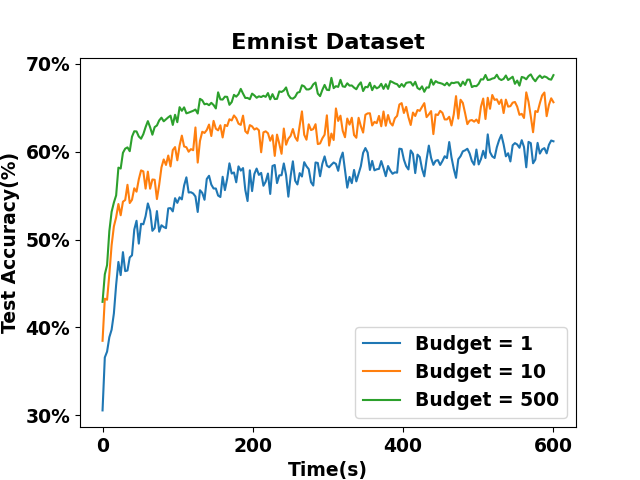}}
\caption{
Model performance for different $B$ for Setup 3.
}
\label{fig_property_B}
\vspace{-2mm}
\end{figure}


\section{Conclusion and Future Work}
\label{sec:conclusion}
In this work, we proposed a randomized client participation mechanism for FL, which guarantees  that the obtained model is unbiased and converges to the globally optimal model. 
We derived a new tractable
convergence bound that analytically  characterizes the impact of clients' participation levels and their non-i.i.d data on the server’s model performance, which yields a more accurate and customized pricing scheme.  To characterize clients' internal interests in the global model, we introduced and modeled a new \emph{intrinsic value} in clients' utilities, which allows clients to pay for the server.  We showed both theoretical and empirical evidence that the intrinsic value plays a critical role in the server and clients' decision-making. 
We conducted extensive experiments 
on a developed hardware prototype to demonstrate the superiority of our mechanism. 

Our mechanism serves as an initial step towards the incentive mechanism design to achieve an  unbiased and convergence guaranteed model with practical randomized client participation. 
In the future, we will extend our incentive mechanism for \emph{incomplete information} scenarios using \emph{Bayesian} method. 
We will further refine our cost model by decoupling the local cost into computation and communication 
consumption, and design an effective measurement to model clients' intrinsic value. 


\appendix
\subsection{Proof of  Lemma 1}
Substituting \eqref{aggregation} into \eqref{unbiased_agg}, we have
\begin{equation}
\label{prooflemma2}
\begin{array}{cl}
\Expect_{\mathcal{S}(q)^{r}}\!\!\left[\mathbf{w}^{r+1}\right]\!\!\!\!\!\!&=\!\mathbf{w}^{r}\!+\!\Expect\left[\sum_{n \in \mathcal{S}(q)^{r}} \frac{a_{n}}{q_{n}} \left(\mathbf{w}^{r+1}_{n}\!-\!\mathbf{w}^{r}\right)\right]
\\

&=\mathbf{w}^r\!+\!\sum_{n=1}^Nq_n\frac{a_n}{q_n}\left(\mathbf{w}^{r+1}_{n}\!-\!\mathbf{w}^{r}\right)\\
&=\mathbf{w}^r\!+\!\sum_{n=1}^Na_n\left(\mathbf{w}^{r+1}_{n}\!-\!\mathbf{w}^{r}\right)\\
&=\mathbf{w}^r\!+\!\overline{\mathbf{w}}^{r+1}-\mathbf{w}^{r}=\overline{\mathbf{w}}^{r+1},
\end{array}
\end{equation}
where  $n$ in the first equation is a random variable in the randomly sampled set $\mathcal{S}(q)^{(r)}$, which concludes the proof.

\subsection{Proof Sketch of Theorem 1}
Following a similar proof of convergence under full client participation \cite{li2019convergence,stich2018local}, we first show the convergence result under full client participation is   $\mathbb{E}[F\left(\overline{\boldsymbol{w}}^R)\right)]\!-\!F^{*} \le{\beta}/{R}$, 
where $\mathbb{E}[F\left(\overline{\boldsymbol{w}}^R)\right)]$ is the  expected global loss after $R$ rounds with full participation, and $\beta$ is the same as in \eqref{convergence}.  
Then, we use mathematical induction to obtain a non-recursive bound on $\mathbb{E}_{\mathcal{S}(\boldsymbol{q})^{r}}\left\|\boldsymbol{w}^R-{\boldsymbol{w}}^*\right\|^{2}$, and show that its difference compared to the bound of full participation $\mathbb{E}\left\|\overline{\boldsymbol{w}}^R-{\boldsymbol{w}}^*\right\|^{2}$ is the  variance introduced in \eqref{variance}. 
After that, we converted the bound of $\mathbb{E}_{\mathcal{S}(\boldsymbol{q})^{r}}\left\|\boldsymbol{w}^R-{\boldsymbol{w}}^*\right\|^{2}$ to   $\mathbb{E}[F\left(\boldsymbol{w}^R(\boldsymbol{q})\right)]-F^{*}$ using $L$-smoothness and strong convexity of $F(\cdot)$, which yields the additional term of $\alpha{\sum\nolimits_{n=1}^N\frac{\left(1-q_n\right)a_n^2G_n^{2}}{q_n}}$ in \eqref{convergence} compared to the upper bound with full client participation. $\hfill\square$ 


 \subsection{Proof of  Theorem 2}
We write the Lagrangian function $L(\boldsymbol{q}, \lambda, {\mu_n}, {\gamma_n})$ of  \textbf{P1}$^{\prime}$ as 
\begin{equation}
\begin{aligned}
\label{Lag}
    L
    \!=\!F^{*} \!+\! \frac{1}{R}\!\left(\alpha \sum_{n=1}^{N} \frac{(1\!-\!q_n)a_{n}^{2} G_{n}^{2}}{q_{n}}\!+\!\beta\right)\!+\sum_{n=1}^N \gamma_n(q_n\!-\!q_{n,\text{max}})\\
    -\!\sum\nolimits_{n=1}^N \mu_nq_n +\lambda\left(\sum\nolimits_{n=1}^N\left(2c_nq_n^2-\frac{\alpha}{R}\frac{v_na_n^2G_n^2}{q_n}\right)\!-\!B\right),\\
\end{aligned}
\end{equation}
where $\lambda>0$, ${\mu_n}\ge0$ and $\gamma_n\ge0$ are Lagrangian multipliers (dual variables). Although Problem \textbf{P1}$^{\prime}$ is non-convex, the Karush-Kuhn-Tucker (KKT) conditions are \emph{necessary for
optimality}.
Thus, 
for the first order condition, we must have 
\begin{equation}
    \label{first_order}
   -\frac{\alpha}{R}\frac{a_n^2G_n^2}{{q_n^{*2}}}+\lambda^*\left(4c_nq_n^{*}+\frac{\alpha}{R}\frac{v_na_n^2G_n^2}{q_n^{*2}}\right)-\mu_n^*+\gamma_n^*=0.
\end{equation}

Given that $0< q_n^*< q_{n,\text{max}}$, we have $\mu_n^*=0$ and $\gamma_n^*=0$ due to the \emph{Complementary Slackness} in the KKT conditions, which we 
rewrite \eqref{first_order} as follows 
\begin{equation}
    \label{first_order2}
    \frac{1}{\lambda^*}=\frac{4R}{\alpha}\frac{c_n{q_n^*}^3}{a_n^2G_n^2}+v_n.
\end{equation}
The result is obtained since ${1}/{\lambda^*}$ 
is independent of index $n$.
\qed 

\subsection{Proof of Theorem 3}
We exploit the relationship between $q_n^*$ and $P_n^*$ in \eqref{optimal_P}, and substitute it 
into  
the first order condition  \eqref{first_order2}, which leads to \eqref{optimal_P2}.
By letting $P_n^*=0$, we have 
$v_n={1}/{(3\lambda^*)}$. Therefore, given that $P_n^*$ in \eqref{optimal_P2}  decreases in $v_n$, we must have  $P_n^*>0$ when $v_n<{1}/{(3\lambda^*)}$, and  $P_n^*<0$ when $v_n>{1}/{(3\lambda^*)}$. Hence, we conclude the proof by letting the threshold $v^{\text{t}}={1}/{(3\lambda^*)}$. \qed


\bibliographystyle{IEEEtran}
\bibliography{mybibfile}

\end{document}